\newcommand{\R}{\mathbb{R}}
\newcommand{\C}{\mathbb{C}}
\newcommand{\Z}{\mathbb{Z}}
\newcommand{\E}{\mathbb{E}}
\newcommand{\F}{\mathbb{F}}
\newcommand{\proj}[1]{| #1 \rangle \langle #1 |}
\newcommand{\bracket}[3]{\langle #1|#2|#3 \rangle}
\newcommand{\braket}[1]{\langle #1 \rangle}
\newcommand{\sm}[1]{\left( \begin{smallmatrix} #1 \end{smallmatrix} \right)}
\DeclareMathOperator{\poly}{poly}
\DeclareMathOperator{\diag}{diag}
\DeclareMathOperator{\NP}{NP}
\DeclareMathOperator{\FBPP}{FBPP}
\newcommand{\be}{\begin{equation}}
\newcommand{\ee}{\end{equation}}
\newcommand{\bea}{\begin{eqnarray}}
\newcommand{\eea}{\end{eqnarray}}
\newcommand{\bes}{\begin{equation*}}
\newcommand{\ees}{\end{equation*}}
\newcommand{\beas}{\begin{eqnarray*}}
\newcommand{\eeas}{\end{eqnarray*}}
\newtheorem*{rep@theorem}{\rep@title}
\newcommand{\newreptheorem}[2]{%
\newenvironment{rep#1}[1]{%
 \def\rep@title{#2 \ref{##1} (restated)}%
 \begin{rep@theorem}}%
 {\end{rep@theorem}}}
\newtheorem{thm}{Theorem}
\newtheorem*{thm*}{Theorem}
\newtheorem{cor}[thm]{Corollary}
\newtheorem{con}[thm]{Conjecture}
\newtheorem{lem}[thm]{Lemma}
\newtheorem*{lem*}{Lemma}
\newcommand{\ket}[1]{{\vert{#1}\rangle}}
\newcommand{\qw}[1][-1]{\ar @{-} [0,#1]}
\newcommand{\gate}[1]{*+<.6em>{#1} \POS ="i","i"+UR;"i"+UL **\dir{-};"i"+DL **\dir{-};"i"+DR **\dir{-};"i"+UR **\dir{-},"i" \qw}
\newcommand{\meter}{*=<1.8em,1.4em>{\xy ="j","j"-<.778em,.322em>;{"j"+<.778em,-.322em> \ellipse ur,_{}},"j"-<0em,.4em>;p+<.5em,.9em> **\dir{-},"j"+<2.2em,2.2em>*{},"j"-<2.2em,2.2em>*{} \endxy} \POS ="i","i"+UR;"i"+UL **\dir{-};"i"+DL **\dir{-};"i"+DR **\dir{-};"i"+UR **\dir{-},"i" \qw}
\newcommand{\multigate}[2]{*+<1em,.9em>{\hphantom{#2}} \POS [0,0]="i",[0,0].[#1,0]="e",!C *{#2},"e"+UR;"e"+UL **\dir{-};"e"+DL **\dir{-};"e"+DR **\dir{-};"e"+UR **\dir{-},"i" \qw}
\newcommand{\ghost}[1]{*+<1em,.9em>{\hphantom{#1}} \qw}
\newcommand{\push}[1]{*{#1}}
\newcommand{\lstick}[1]{*!R!<.5em,0em>=<0em>{#1}}
\newcommand{\Qcircuit}{\xymatrix @*=<0em>}
\begin{document}

\title{Achieving quantum supremacy with sparse and noisy commuting quantum computations}
\author{Michael J. Bremner}
\affiliation{Centre for Quantum Computation and Communication Technology, Centre for Quantum Software and Information, Faculty of Engineering and Information Technology, University of Technology Sydney, NSW 2007, Australia.}
\author{Ashley Montanaro}
\affiliation{School of Mathematics, University of Bristol, UK}
\email{ashley.montanaro@bristol.ac.uk}
\author{Dan J. Shepherd}
\affiliation{NCSC, Hubble Road, Cheltenham, UK.}
\maketitle

\begin{abstract}
The class of commuting quantum circuits known as IQP (instantaneous quantum polynomial-time) has been shown to be hard to simulate classically, assuming certain complexity-theoretic conjectures. Here we study the power of IQP circuits in the presence of physically motivated constraints. First, we show that there is a family of sparse IQP circuits that can be implemented on a square lattice of $n$ qubits in depth $O(\sqrt{n} \log n)$, and which is likely hard to simulate classically. Next, we show that, if an arbitrarily small constant amount of noise is applied to each qubit at the end of any IQP circuit whose output probability distribution is sufficiently anticoncentrated, there is a polynomial-time classical algorithm that simulates sampling from the resulting distribution, up to constant accuracy in total variation distance. However, we show that purely classical error-correction techniques can be used to design IQP circuits which remain hard to simulate classically, even in the presence of arbitrary amounts of noise of this form. These results demonstrate the challenges faced by experiments designed to demonstrate quantum supremacy over classical computation, and how these challenges can be overcome.
\end{abstract}


\section{Introduction}

Over the last few years there has been significant attention devoted to devising experimental demonstrations of {\em quantum supremacy}~\cite{preskill12}: namely a quantum computer solving a computational task that goes beyond what a classical machine could achieve. This is, in part, driven by the hope that a clear demonstration of quantum supremacy can be performed with a device that is intermediate between the small quantum circuits that can currently be built and a full-scale quantum computer. The theoretical challenge that this poses is twofold: firstly we must identify the physically least expensive quantum computations that are classically unachievable; and we must also determine if this advantage can be maintained in the presence of physical noise. 

There are several intermediate quantum computing models which could be used to demonstrate quantum supremacy, including simple linear-optical circuits (the boson sampling problem~\cite{aaronson13}); the one clean qubit model~\cite{morimae14}; and commuting quantum circuits, a model known as ``IQP''~\cite{shepherd09,bremner11}. In each of these cases, it has been shown that efficient classical simulation of the simple quantum computations involved is not possible, assuming that the polynomial hierarchy does not collapse. However, these results only prove hardness of simulating the {\em ideal} quantum computations in question up to a small relative error in each output probability.

Any quantum experiment will be subject to noise, and the noisy experiment could be substantially easier to simulate than the noise-free experiment. In an attempt to address this, it was shown in~\cite{aaronson13,bremner16} that, assuming certain additional complexity-theoretic conjectures, the probability distributions resulting from boson sampling and IQP circuits are still hard to sample from classically, even up to small total variation distance. For example, in~\cite{bremner16} the following two conjectures were made, one native to condensed-matter physics, the other to computer science:

\begin{con}
\label{con:isingorig}
Consider the partition function of the general Ising model,
\be \label{eq:isingorig} Z(\omega) = \sum_{z \in \{\pm1\}^n} \omega^{\sum_{i<j} w_{ij} z_i z_j + \sum_{k=1}^n v_k z_k }, \ee
where the exponentiated sum is over the complete graph on $n$ vertices, $w_{ij}$ and $v_k$ are real edge and vertex weights, and $\omega \in \C$. Let the edge and vertex weights be picked uniformly at random from the set $\{0,\dots,7\}$.

Then it is \#P-hard to approximate $|Z(e^{i\pi/8})|^2$ up to multiplicative error $1/4 + o(1)$ for a $1/24$ fraction of instances, over the random choice of weights.
\end{con}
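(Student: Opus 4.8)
The plan is to follow the worst-case-to-average-case template familiar from the permanent and boson sampling, specialised to this Ising sum. I would first establish worst-case \#P-hardness of evaluating $|Z(e^{i\pi/8})|^2$ exactly. Because $\omega = e^{i\pi/8}$ has order $16$ and the weights are integers in $\{0,\dots,7\}$, the exponent $\sum_{i<j} w_{ij} z_i z_j + \sum_k v_k z_k$ lives in a structured finite set, and by the known correspondence between such weighted $\pm1$ sums and IQP output probabilities one can encode an arbitrary \#P computation into the weights. This makes $|Z|^2$ a rescaled acceptance probability that is \#P-hard to compute; this step is essentially the worst-case counterpart already proved in \cite{bremner16}, and is the least problematic.

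The second stage is to upgrade worst-case to average-case via Lipton-style random self-reducibility. The structural fact to exploit is that, after a suitable reparameterisation, $Z$ is a low-degree polynomial in the edge and vertex weights. Given a fixed worst-case weight vector $w$, I would draw a uniformly random shift $r$ and form the line $w(t) = w + t\,r$ over the relevant modulus; for each fixed nonzero $t$ the shifted weights are (close to) uniform in $\{0,\dots,7\}$, hence a genuinely random instance of the stated distribution. Evaluating $Z$ at enough distinct values of $t$ and interpolating back to $t = 0$ then recovers the worst-case value. The constraint that the weights be integers in a bounded range, combined with the order-$16$ modular arithmetic, limits how many interpolation points are available and how close to uniform the random instances are, and I expect this accounting to be the origin of the particular $1/24$ fraction appearing in the statement.

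The hard part is the third stage: making the interpolation robust. The hypothesised oracle returns only a multiplicative $1/4 + o(1)$ approximation, and only on a $1/24$ fraction of instances, whereas plain polynomial interpolation demands exact values at every sampled point. Since $1/24 < 1/2$ we are below the unique-decoding radius, so Berlekamp--Welch is unavailable and one is forced into the list-decoding regime (Sudan / Guruswami--Sudan): the interpolation would return a short list of candidate values, from which the correct worst-case answer must still be isolated. Worse, the corruption here is not a clean mixture of exact and arbitrary values but a continuous multiplicative perturbation on the surviving fraction, and these errors must be shown not to blow up as they propagate through the interpolation.

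This is precisely the obstruction that keeps the statement a conjecture rather than a theorem. Reconciling a success fraction as small as $1/24$ with list decoding, while simultaneously controlling multiplicative error through the reparameterised polynomial, is the direct analogue of the Permanent-of-Gaussians and Permanent-Anticoncentration conjectures that remain open in the boson-sampling setting, and I do not expect to close it with the present tools. A genuine proof would seem to require either a substantially more error-tolerant self-reduction for $Z$, or additional algebraic structure of the partition function at $e^{i\pi/8}$ that pins the correct list element down directly; identifying such structure is where I would focus the remaining effort.
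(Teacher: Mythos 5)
This statement is Conjecture~\ref{con:isingorig}: neither this paper nor~\cite{bremner16} proves it. It is imported verbatim as a complexity-theoretic \emph{assumption}, on which Theorems~\ref{thm:square} and~\ref{thm:main2} are conditioned. Your final assessment --- that the worst-case \#P-hardness is known but the average-case upgrade is open, and that the obstruction is the combination of multiplicative error with a success fraction far below the unique-decoding radius --- is exactly the correct reading of the situation, and it matches the paper's stance that this is the analogue of the Permanent-of-Gaussians--type conjectures in boson sampling. So there is no proof to compare against, and your refusal to claim one is the right call.

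One concrete correction to your second stage: the constants $1/24$ and $1/4 + o(1)$ do not originate in any accounting of interpolation points or closeness-to-uniformity of shifted weights. They are reverse-engineered from the sampling-hardness reduction itself: the Stockmeyer-type $\FBPP^{\NP}$ argument (Corollary~\ref{cor:addapprox} here) shows that a classical approximate sampler yields a relative-error $1/4+o(1)$ approximation to $|Z|^2$ on a $\beta/2$ fraction of instances, where $\beta$ is the anticoncentration probability $\Pr[|\bracket{0}{\mathcal{C}}{0}|^2 \ge \alpha 2^{-n}]$ supplied by a Paley--Zygmund bound. For the dense circuits of~\cite{bremner16} one has $\E[|\bracket{0}{\mathcal{C}}{0}|^4] \le 3 \cdot 2^{-2n}$, so with $\alpha = 1/2$ this gives $\beta \ge (1-\alpha)^2/3 = 1/12$ and hence the fraction $\beta/2 = 1/24$ in the conjecture; the same bookkeeping in the present paper, using the weaker sparse bound $\E[|\bracket{0}{\mathcal{C}}{0}|^4] \le 5 \cdot 2^{-2n}$ of Lemma~\ref{lem:anticonc}, is precisely why Conjecture~\ref{con:ising} carries the smaller constant $c < 1/80$. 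In other words, the conjecture's parameters are chosen so that its hypothesis is exactly what the reduction delivers, and any future proof (list-decoding-based or otherwise) would need to meet those externally fixed parameters rather than derive them.
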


\begin{con}
\label{con:poly}
Let $f:\{0,1\}^n \rightarrow \{0,1\}$ be a uniformly random degree-3 polynomial over $\F_2$, and define $\operatorname{ngap}(f) := (|\{x: f(x)=0\}| - |\{x: f(x)=1\}|)/2^n$. Then it is \#P-hard to approximate $\operatorname{ngap}(f)^2$ up to a multiplicative error of $1/4 + o(1)$ for a $1/24$ fraction of polynomials $f$.
\end{con}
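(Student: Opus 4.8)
The plan is to first establish worst-case \#P-hardness of evaluating $\operatorname{ngap}(f)^2$ exactly, and then to promote this to the average-case statement by a random self-reducibility argument, whose robustness is the real crux of the difficulty.

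For the worst-case step I would use the identification of $\operatorname{ngap}(f) = 2^{-n}\sum_{x\in\{0,1\}^n}(-1)^{f(x)}$ with the amplitude $\bra{0^n}H^{\otimes n}D_f H^{\otimes n}\ket{0^n}$, where $D_f$ applies the phase $(-1)^{f(x)}$ via $Z$, $CZ$ and $CCZ$ gates corresponding to the degree-$1$, $2$ and $3$ monomials of $f$. This is exactly an IQP circuit, so $\operatorname{ngap}(f)^2$ is its probability of outputting $0^n$. Computing this probability exactly is \#P-hard, since post-selected IQP equals $\mathsf{PP}$; equivalently, the exponential sum $\sum_x(-1)^{f(x)}$ counts the zeros of a cubic polynomial over $\F_2$, which is itself a \#P-hard quantity.

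For the average-case step I would exploit the fact that, after embedding the $O(n^3)$ coefficients of $f$ (or, transparently, the equivalent Ising weights of Conjecture~\ref{con:isingorig}) into a larger ring, $\operatorname{ngap}$ becomes a polynomial of degree $\poly(n)$ in those parameters. Fixing a worst-case instance and drawing a random line through it, the restriction of $\operatorname{ngap}$ to that line is a low-degree univariate polynomial whose value at the worst-case point is determined by its values at $\poly(n)$ further points. Crucially, each of those points is individually (near-)uniformly distributed, so an oracle succeeding on a large enough fraction of random instances would in principle let us interpolate back to the worst-case value, yielding hardness for a random $f$.

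The main obstacle — and the reason this is stated as a conjecture rather than proved — is reconciling this reduction with the two weakenings in the hypothesis: only a $1/24$ fraction of instances need be hard, and the approximation is allowed multiplicative error $1/4+o(1)$. Exact interpolation is extremely brittle: a multiplicative error at each sampled point propagates and amplifies through the Vandermonde system, while a constant fraction of corrupted evaluations forces polynomial reconstruction from noisy, partially adversarial data. Closing this gap would require an error-tolerant interpolation scheme together with an anticoncentration bound guaranteeing that $\operatorname{ngap}(f)^2$ is typically large enough for a multiplicative approximation to carry information — precisely the ingredients (analogous to the Permanent-of-Gaussians conjecture underlying boson sampling) that remain out of reach, which is why we adopt Conjecture~\ref{con:poly} as a hypothesis.
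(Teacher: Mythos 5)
Conjecture \ref{con:poly} is exactly that---a conjecture: the paper offers no proof of it, importing it verbatim from \cite{bremner16} as a complexity-theoretic hypothesis on which the hardness results are conditioned. Your proposal correctly recognizes this: the worst-case \#P-hardness sketch (identifying $\operatorname{ngap}(f)$ with an IQP amplitude / counting zeros of a cubic over $\F_2$) is sound, and your diagnosis of why polynomial-interpolation-based random self-reducibility cannot absorb both the multiplicative $1/4+o(1)$ error and the restriction to a $1/24$ fraction of instances is precisely the open obstacle that forces the statement to be assumed rather than proved, which matches the paper's treatment.
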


It was shown in~\cite{bremner16} that, if we assume either Conjecture \ref{con:isingorig} or Conjecture \ref{con:poly}, and the widely-believed complexity-theoretic assumption that the polynomial hierarchy does not collapse, then there is no polynomial-time classical algorithm for approximately sampling from the output distributions of IQP circuits. That is, if $p$ is the distribution that the noise-free quantum circuit would produce, it is hard for the classical machine to sample from any distribution $p'$ such that $\| p - p' \|_1 \le \epsilon$, for some small $\epsilon$, where the size of $\epsilon$ depends on the conjectures one is willing to assume. These results imply that a fault-tolerant implementation of IQP sampling or boson sampling can be made resilient to noise while (potentially) maintaining a quantum advantage.

Although this was a significant step towards the near-term possibility of quantum supremacy, these results still suffer from some shortcomings:
\begin{enumerate}
\item They do not yet resolve the question of whether realistically noisy, and non-fault-tolerant, quantum experiments are likely to be hard to simulate classically. Indeed, applying a small amount of independent noise to each qubit can readily lead to a distribution $p'$ which is much further from $p$ than the regime in which the results of~\cite{aaronson13,bremner16} are applicable.
\item The results of~\cite{bremner16} assume that all pairs of qubits are able to interact. Such long-range interactions incur significant physical resource overheads for most computational architectures.
\end{enumerate}


\subsection{Our results}

Here we study the behaviour of IQP circuits which are implemented on hardware with spatial locality constraints, and in the presence of noise. These are critical questions for any realistic experimental implementation.

An IQP circuit (``Instantaneous Quantum Polynomial-time'') is a quantum circuit of the form $\mathcal{C} = H^{\otimes n} D H^{\otimes n}$, where $H$ is the Hadamard gate and $D$ is a diagonal matrix produced from $\poly(n)$ diagonal gates. The {\bf IQP sampling problem} is to sample from the distribution $p$ on $n$-bit strings produced by applying $\mathcal{C}$ to the initial state $\ket{0}^{\otimes n}$, then measuring each qubit in the computational basis. (Throughout, $p$ denotes this original noise-free distribution.)

Our first main result is the following:

\begin{thm}[informal]
\label{thm:square}
There is a family of commuting quantum circuits on $n$ qubits where: with high probability, a random circuit picked from the family contains $O(n \log n)$ 2-qubit gates and can be implemented on a 2d square lattice in depth $O(\sqrt{n} \log n)$; and a constant fraction of circuits picked from the family cannot be simulated classically unless the polynomial hierarchy collapses to the third level, assuming a ``sparse'' version of Conjecture \ref{con:isingorig}. Here ``simulate'' means to approximately sample from the output distribution of the circuit, up to $\ell_1$ distance $\epsilon$, for some constant $\epsilon > 0$.
\end{thm}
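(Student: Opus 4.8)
The plan is to reduce the simulation problem to approximating an Ising partition function and then to assemble three pieces: the IQP--Ising dictionary, a Stockmeyer-plus-anticoncentration hardness argument in the style of~\cite{bremner16}, and an explicit lattice compilation that exploits the commutativity of the circuit. First I would recall the exact identity underlying the IQP--Ising correspondence: for a circuit $\mathcal{C}=H^{\otimes n}DH^{\otimes n}$ in which $D$ is generated by the gates $e^{i(\pi/8)w_{ij}Z_iZ_j}$ and $e^{i(\pi/8)v_kZ_k}$, the all-zeros output probability is $p(0^n)=2^{-2n}\,|Z(e^{i\pi/8})|^2$, where $Z$ is the partition function of Conjecture~\ref{con:isingorig}. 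I would then define the circuit family by drawing the interaction graph at random so that it carries $O(n\log n)$ edges with high probability, and by drawing the weights $w_{ij},v_k$ uniformly from $\{0,\dots,7\}$ exactly as in the conjecture; each $ZZ$ edge contributes one two-qubit gate, giving $O(n\log n)$ gates per circuit.

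For the hardness half I would argue as in~\cite{bremner16}. Suppose a classical algorithm samples from $p'$ with $\|p-p'\|_1\le\epsilon$. A ``hiding'' property of IQP circuits lets me arrange that the designated output $0^n$ is statistically indistinguishable from a uniformly random output, so that $\ell_1$-closeness together with Markov's inequality gives, for a large fraction of circuits, an estimate of $p(0^n)$ accurate to additive error $O(\epsilon\, 2^{-n})$ that is computable by an $\mathrm{FBPP}^{\mathrm{NP}}$ machine via Stockmeyer counting. To upgrade this additive guarantee to the multiplicative one demanded by the conjecture I need anticoncentration: that $p(0^n)\ge\alpha\,2^{-n}$ with constant probability $\beta$ over the random weights and graph. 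Granting the sparse version of Conjecture~\ref{con:isingorig}, multiplicatively approximating $|Z(e^{i\pi/8})|^2$ is \#P-hard on a $1/24$ fraction of instances; intersecting this fraction with the fraction on which the estimation succeeds leaves a constant fraction of circuits for which the classical sampler would place $\mathrm{P}^{\#\mathrm{P}}$ inside the third level of the polynomial hierarchy, which collapses it by Toda's theorem.

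For the locality half I would exploit the fact that every generator of $D$ is diagonal, so all the $ZZ$ and single-qubit gates commute and may be applied in any order. In a random graph with $O(n\log n)$ edges the maximum degree is $O(\log n)$ with high probability, so by edge-colouring the $ZZ$ gates partition into $O(\log n)$ matchings. I would realise each matching on the $\sqrt{n}\times\sqrt{n}$ lattice by using a nearest-neighbour SWAP network to bring all matched pairs into adjacent sites, applying the now geometrically local $ZZ$ gates in a single layer, and then undoing the routing; since any permutation of the lattice sites can be routed by SWAPs in depth $O(\sqrt{n})$, each matching costs $O(\sqrt{n})$ depth and the whole of $D$ costs $O(\sqrt{n}\log n)$. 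The two Hadamard layers add only constant depth, and because the SWAP conjugations cancel layer by layer the compiled circuit still implements $H^{\otimes n}DH^{\otimes n}$ on the original qubits.

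The step I expect to be the main obstacle is establishing anticoncentration for the \emph{sparse} family. The natural route is a Paley--Zygmund second-moment argument, controlling $\mathbb{E}[p(0^n)^2]$ by a sum over pairs of computational-basis strings and exploiting the cancellations produced by the random weights. On the complete graph these cancellations are clean and give $\mathbb{E}[p(0^n)^2]=O(\mathbb{E}[p(0^n)]^2)$, but on a sparse random graph the correlations between strings are governed by the local combinatorics of the graph, and one must verify that an average degree of order $\log n$ is large enough to keep the second moment within a constant factor of the first moment squared while remaining small enough to respect the $O(n\log n)$ gate budget. Reconciling these two pressures, and checking that the resulting anticoncentration is uniform enough to survive the Markov step in the hardness reduction, is where the bulk of the technical effort will lie.
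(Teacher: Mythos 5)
Your proposal is correct and takes essentially the same route as the paper: the same sparse random Ising-type circuit family, the same Stockmeyer-plus-hiding argument lifted from~\cite{bremner16} with a Paley--Zygmund second-moment anticoncentration bound, and the same edge-colouring-into-$O(\log n)$-matchings plus $O(\sqrt{n})$-depth SWAP-routing compilation (the paper realises the routing step via the Schnorr--Shamir sorting network on the $\sqrt{n}\times\sqrt{n}$ lattice). The anticoncentration step you flag as the main obstacle is exactly what the paper proves (for edge probability $\gamma(\ln n)/n$ with $\gamma$ a sufficiently large constant, the fourth moment of $\bracket{0}{\mathcal{C}}{0}$ is at most $5\cdot 2^{-2n}$), confirming that average degree $\Theta(\log n)$ suffices.
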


In the above we use ``2d square lattice'' as shorthand for an architecture consisting of a square lattice of $\sqrt{n} \times \sqrt{n}$ qubits, where all gates are performed across neighbours in the lattice. To prove Theorem \ref{thm:square} we proceed as follows. First, we show that the ``dense'' IQP circuits from~\cite{bremner16}, which contained $O(n^2)$ gates, can be reduced to ``sparse'' circuits of $O(n \log n)$ long-range gates while still likely being hard to simulate. Second, we show that a random circuit of this form can be parallelised to depth $O(\log n)$ with high probability. Third, we apply the results of~\cite{beals13} to show that sorting networks can be used to implement an arbitrary quantum circuit of depth $t$ on a 2d square lattice in depth $D=O(t \sqrt{n})$. (Note that this final circuit is no longer an IQP circuit as it contains SWAP gates.)

While it might seem that sparse IQP sampling is more likely to be classically simulable, it is possible that the converse is true. The complexity-theoretic hardness arguments rely on the conjecture that complex temperature partition functions of the Ising model retain \#P-hardness on random graphs~\cite{bremner16}. It is known that there are a range of related \#P-hard and NP-hard graph problems that admit an efficient approximation for random dense graphs \cite{arora99, alon94}, while retaining their hardness on sparse graphs. However, it should be stressed that there are no known efficient approximation methods for the complex temperature partition functions associated with sparse- and dense-IQP sampling.

It remains to be seen whether a more sparse version of IQP sampling can be devised while retaining its classical hardness. Standard tensor network contraction techniques would allow any output probability of the above circuits on a square lattice to be classically computed in time $O(2^{D\sqrt{n}})$, so achieving a similar hardness result for $D = o(\sqrt{n})$ would violate the counting exponential time hypothesis~\cite{brand16, curticapean15}. The challenge remains to remove a factor of $\log n$ from the depth while maintaining the anticoncentration requirements of \cite{aaronson13,bremner16}. 

It is worth comparing Theorem \ref{thm:square} with results of Brown and Fawzi~\cite{brown12,brown15}. In~\cite{brown15}, these authors show that random noncommutative quantum circuits with $O(n \log^2 n)$ gates are good decouplers (a somewhat similar notion of randomisation), and that such circuits can be parallelised to depth $O(\log^3 n)$ with high probability. Using a sorting network construction, these circuits could be implemented on a 2d square lattice in depth $O(\sqrt{n} \log^3 n)$. Our result thus saves an $O(\log^2 n)$ factor over~\cite{brown15}. One reason for this is that the commutative nature of IQP circuits makes them easier to parallelise. However, in~\cite{brown12}, Brown and Fawzi also study an alternative model for random circuits, where gates are applied at each timestep according to a random perfect matching on the complete graph, and show that this achieves a weaker notion of ``scrambling'' in depth $O(\log n)$. Although it is not clear that this notion in itself would be sufficient for a complexity-theoretic hardness argument, it is thus plausible that our results could be extended to noncommuting circuits. It should also be noted that recent \cite{boixo16} numerical evidence suggests that anticoncentration can be achieved on a square lattice with circuits of depth $O(\sqrt{n})$, potentially allowing for random circuit sampling quantum supremacy experiments.

Next we study the effect of noise on IQP circuits. We consider a very simple noise model: independent depolarising noise applied to every qubit at the end of the circuit. First the IQP circuit is applied to $\ket{0}^{\otimes n}$ as normal; let $\ket{\psi}$ be the resulting state. Then the qubit depolarising channel $\mathcal{D}_\epsilon$ with noise rate $\epsilon$ is applied to each qubit of $\ket{\psi}$. This channel is defined by $\mathcal{D}_\epsilon(\rho) = (1-\epsilon)\rho + \epsilon \frac{I}{2}$ for any mixed state $\rho$ of a single qubit; with probability $1-\epsilon$, the input state is retained, and with probability $\epsilon$, it is discarded and replaced with the maximally mixed state. Finally, each qubit is measured in the computational basis to give a distribution $\widetilde{p}$. (Throughout, $\widetilde{p}$ denotes the distribution created by incorporating some local noise.)

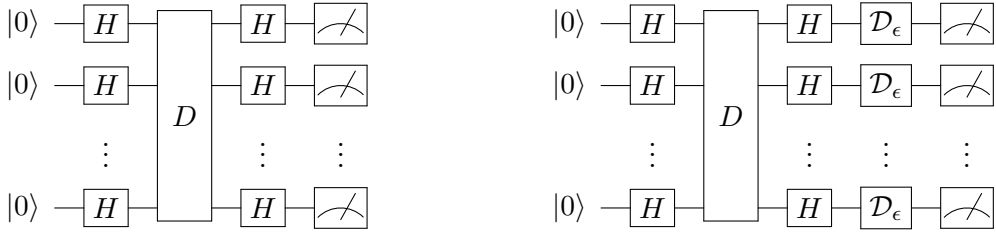
\begin{figure}
\[
\Qcircuit @C=1em @R=.7em {
\lstick{\ket{0}} & \gate{H} & \multigate{3}{D} & \gate{H} & \meter &&&&&&&& \lstick{\ket{0}} & \gate{H} & \multigate{3}{D} & \gate{H} & \gate{\mathcal{D}_\epsilon} & \meter  \\
\lstick{\ket{0}} & \gate{H} & \ghost{D} & \gate{H} & \meter &&&&&&&& \lstick{\ket{0}} & \gate{H} & \ghost{D} & \gate{H} & \gate{\mathcal{D}_\epsilon} &\meter\\
& \push{\vdots} & & \vdots & \vdots & &&&&&&&& \push{\vdots} & & \vdots & \vdots & \vdots \\
\lstick{\ket{0}} & \gate{H} & \ghost{D} & \gate{H} & \meter &&&&&&&& \lstick{\ket{0}} & \gate{H} & \ghost{D} & \gate{H} & \gate{\mathcal{D}_\epsilon} & \meter \\
}
\]
\caption{A standard IQP circuit, and an IQP circuit with depolarising noise. $D$ is a circuit made up of $\poly(n)$ diagonal gates.}
\end{figure}

Note that this model is equivalent to sampling a bit-string $x \in \{0,1\}^n$ according to $p$, then flipping each bit of $x$ with independent probability $\epsilon/2$. To see this, first note that the operation of measuring a qubit in the computational basis commutes with $\mathcal{D}_\epsilon$. If we write $\mathcal{M}(\rho) = \proj{0} \rho \proj{0} + \proj{1} \rho \proj{1}$ for this measurement operation, then
\beas
\mathcal{M}(\mathcal{D}_\epsilon(\rho)) &=& \proj{0} \left((1-\epsilon)\rho + \epsilon \frac{I}{2}\right) \proj{0} + \proj{1} \left((1-\epsilon)\rho + \epsilon \frac{I}{2}\right) \proj{1} = (1-\epsilon) \mathcal{M}(\rho) + \epsilon \frac{I}{2}\\
&=& \mathcal{D}_\epsilon(\mathcal{M}(\rho)).
\eeas
Second, when applied to $\proj{0}$, $\mathcal{D}_\epsilon$ replaces it with the state $(1-\epsilon/2)\proj{0} + (\epsilon/2)\proj{1}$, i.e.\ applies a NOT operation to the state with probability $\epsilon/2$; the same is true when applied to $\proj{1}$.

We also remark that, for IQP circuits, this notion of noise is equivalent to applying depolarising noise to the qubits at the start of the computation. This is because noise at the start of the computation is equivalent to replacing the initial state $\ket{0^n}$ with a state $\ket{y}$ where $y$ is distributed as a noisy version of $0^n$, and $\bracket{x}{\mathcal{C}}{y} = \bracket{x+y}{\mathcal{C}}{0}$.

We first show that if fault-tolerance techniques are not used, then ``most'' IQP circuits can be classically simulated approximately if {\em any} constant amount of noise is applied in this model. The notion of approximate simulation we use is sampling up to accuracy $\delta$ in $\ell_1$ norm, i.e.\ sampling from some distribution $\widetilde{p}'$ such that $\sum_x |\widetilde{p}'_x - \widetilde{p}_x| \le \delta$. (Throughout, $\widetilde{p}'$ denotes any distribution that is close to $\widetilde{p}$ in $\ell_1$ norm.) We will show:

\begin{thm}
\label{thm:main}
Consider a unitary circuit $\mathcal{C} = H^{\otimes n} D H^{\otimes n}$ whose diagonal part $D$ is defined by $\bracket{x}{D}{x} = f(x)$ for some $f: \{0,1\}^n \rightarrow \C$ such that $f(x)$ can be computed in time $\poly(n)$ for any $x$. Let the probability of receiving output $x$ after applying $\mathcal{C}$ to input $\ket{0}^{\otimes n}$ be $p_x$, and assume that $\sum_x p_x^2 \le \alpha 2^{-n}$ for some $\alpha$. Further assume $\mathcal{C}$ experiences independent depolarising noise on each qubit with rate $\epsilon$ as defined above. Then $T$ samples can be generated from a distribution which approximates the noisy output probability distribution up to $\delta$ in $\ell_1$ norm, in time $n^{O(\log(\alpha/\delta)/\epsilon)} + T \poly(n)$.
\end{thm}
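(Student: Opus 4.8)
The plan is to exploit the fact that the depolarising noise damps the Fourier coefficients of the output distribution by a factor that decays exponentially in their order, so that $\widetilde p$ is well approximated by a low-degree function whose coefficients, unlike the individual probabilities $p_x$, can be estimated classically. Throughout I work with the Fourier transform over $\F_2^n$, writing $\hat g(s) = 2^{-n}\sum_x (-1)^{s\cdot x} g(x)$ for $g:\{0,1\}^n\to\R$ and $s\in\{0,1\}^n$. The amplitude is $\langle x|\mathcal C|0^n\rangle = 2^{-n}\sum_z (-1)^{x\cdot z} f(z)$, so $p_x = |\langle x|\mathcal C|0^n\rangle|^2$, and since the noise model is equivalent to flipping each output bit independently with probability $\epsilon/2$ (as established above), $\widetilde p = p * \mu$ is the convolution of $p$ with the product noise distribution $\mu$. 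As convolution over $\F_2^n$ diagonalises in the Fourier basis and $\mu$ factorises, a one-line computation gives $\widehat{\widetilde p}(s) = (1-\epsilon)^{|s|}\,\hat p(s)$, so the weight-$|s|$ coefficient is suppressed by $(1-\epsilon)^{|s|}$.

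I would then truncate. Let $q$ keep only the coefficients of order $|s|\le d$. By Parseval's identity and the damping,
\[ \|\widetilde p - q\|_2^2 = 2^n\!\!\sum_{|s|>d}(1-\epsilon)^{2|s|}\hat p(s)^2 \;\le\; (1-\epsilon)^{2d}\sum_x p_x^2 \;\le\; \alpha\,(1-\epsilon)^{2d}\,2^{-n}, \]
using the anticoncentration hypothesis $\sum_x p_x^2 \le \alpha 2^{-n}$ at the last step. Cauchy--Schwarz ($\|\cdot\|_1 \le 2^{n/2}\|\cdot\|_2$) then yields $\|\widetilde p - q\|_1 \le \sqrt\alpha\,(1-\epsilon)^d$, which drops below $\delta/2$ as soon as $d = O(\log(\alpha/\delta)/\epsilon)$ (using $\log(1/(1-\epsilon)) = \Theta(\epsilon)$). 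This fixes the degree to which we must work.

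The crux is to estimate the $O(n^d)$ surviving coefficients even though computing a single $p_x$ is \#P-hard. The enabling observation is the autocorrelation identity
\[ 2^n\hat p(s) = \sum_x p_x (-1)^{s\cdot x} = \E_z\!\left[f(z)\,\overline{f(z\oplus s)}\right], \]
an expectation over a uniformly random $z\in\{0,1\}^n$ of a quantity that is computable in $\poly(n)$ time and has modulus one (as $D$ is unitary). Each coefficient is therefore estimable to additive accuracy $\eta$ by a Monte Carlo average of $O(\eta^{-2}\log(\cdot))$ independent samples. A second Parseval estimate shows that the resulting estimate $\tilde q$ obeys $\|\tilde q - q\|_1 \le n^{d/2}\eta$, so choosing $\eta = \Theta(\delta\, n^{-d/2})$ makes this at most $\delta/2$; the whole estimation phase costs $n^{O(d)} = n^{O(\log(\alpha/\delta)/\epsilon)}$, the first term of the claimed running time. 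To draw samples I would use that every marginal and conditional of a degree-$\le d$ function is again a sum over only its low-weight coefficients: summing $\tilde q$ over $x_{k+1},\dots,x_n$ annihilates every $s$ not supported on $\{1,\dots,k\}$, so each conditional bit probability is a sum of at most $O(n^d)$ precomputed terms, evaluable in $\poly(n)$ time for constant $d$. Sampling bit-by-bit from these conditionals, clipping any value outside $[0,1]$ and renormalising, produces $T$ samples in additional time $T\poly(n)$; a standard argument bounds the extra $\ell_1$ error from the clipping by a constant multiple of $\|\tilde q - \widetilde p\|_1$, keeping the output distribution $\widetilde p'$ within $\delta$ of $\widetilde p$.

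I expect the third step to be the main obstacle. The truncation and error bookkeeping are routine once the Fourier damping is established, but the whole approach only gets off the ground because, although no individual output probability is efficiently computable, the low-order Fourier coefficients are averages of an efficiently computable, bounded function and so are accessible by classical sampling. The one further point requiring care is the conversion of the approximate, and possibly non-positive, surrogate $\tilde q$ into an honest sampler without an uncontrolled blow-up of the $\ell_1$ error.
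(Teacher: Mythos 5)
Your proposal is correct and follows essentially the same route as the paper: the same Fourier damping $(1-\epsilon)^{|s|}$ of the noisy distribution's coefficients, the same truncation at degree $O(\log(\alpha/\delta)/\epsilon)$ with Parseval bookkeeping, the same autocorrelation identity $2^n\hat{p}(s) = \E_y[\overline{f(y)}f(y\oplus s)]$ estimated by Monte Carlo, and the same marginals-based bit-by-bit sampler. The only step you defer to a ``standard argument'' --- that handling negative values by clipping and renormalising inflates the $\ell_1$ error by only a constant factor --- is exactly the content of the paper's Lemma \ref{lem:approx}, which the authors do not treat as standard but prove in Appendix \ref{app:sampling} via a global (not level-by-level) accounting of the clipped negative mass, precisely to rule out the naive $O(n\delta)$ accumulation over the $n$ sampling stages.
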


The parameter $\alpha$ occurring in Theorem \ref{thm:main} measures how spread out the output probability distribution of $\mathcal{C}$ is. It is shown in~\cite{bremner16} that, for random IQP circuits picked from some natural distributions, the expected value of $\alpha$ is $O(1)$. Hence, for an average IQP circuit picked from one of these distributions, and for fixed $\delta$ and $\epsilon$, the runtime of the classical algorithm is polynomial in $n$. The circuits that were proven hard to simulate in~\cite{bremner16} (assuming some conjectures in complexity theory) have $\alpha = O(1)$. So Theorem \ref{thm:main} shows that precisely those circuits which are hard to simulate in the absence of noise become easy in the presence of noise.

This theorem actually covers cases more general than IQP, since computing $f$ could even require ancilla qubits that are not available in the usual IQP model. Indeed, as well as the application to IQP, the ideas behind Theorem \ref{thm:main} can also be used to show that, in the absence of fault-tolerance, Simon's algorithm~\cite{simon97} can be simulated classically if an arbitrarily small amount of depolarising noise is applied to each qubit. The proof of Theorem \ref{thm:main} uses Fourier analysis over $\Z_2^n$ to show that a noisy output probability distribution $\widetilde{p}$ can be approximated well given the knowledge of only a small number of its Fourier coefficients, because the high-order coefficients are exponentially suppressed by the noise.

Our final result is that this notion of noise can be fought using simple ideas from classical error-correction, while still remaining within the framework of IQP. We show that for any IQP circuit $\mathcal{C}$ on $n$ qubits, we can produce a new IQP circuit $\mathcal{C}'$ on $O(n)$ qubits in polynomial time such that, if depolarising noise is applied to every qubit of the output of $\mathcal{C}'$, we can nevertheless sample from a distribution which is close to $p$ up to arbitrarily small $\ell_1$ distance. This holds for {\em any} noise rate $\epsilon<1$, contrasting with standard fault-tolerance thresholds. (However, the notion of noise here is different and substantially simpler than the usual models.) Crucially, this noise-tolerance can be combined with the notion of approximation used in~\cite{bremner16} to show that, under the same complexity assumptions as~\cite{bremner16}, it is hard for a classical algorithm to approximately sample from the noisy output distribution of $\mathcal{C}'$, up to small $\ell_1$ distance.

\begin{thm}
\label{thm:main2}
Assume either Conjecture \ref{con:isingorig} or Conjecture \ref{con:poly}. Let $\mathcal{C} = H^{\otimes n} D H^{\otimes n}$ be an IQP circuit which experiences independent depolarising noise on each qubit with rate $\epsilon$ as defined above, for some $\epsilon < 1$. Then there exists $\delta > 0$ such that, if there is a polynomial-time classical algorithm which samples from the output probability distribution of all noisy IQP circuits $\mathcal{C}$ of this form up to accuracy $\delta$ in $\ell_1$ norm, the polynomial hierarchy collapses to its third level.
\end{thm}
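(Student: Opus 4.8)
The plan is to protect the original IQP distribution using a purely classical linear error-correcting code, folded into the circuit's diagonal phase function, and to run the (classical) decoder as post-processing after measurement. Concretely, I would fix an IQP circuit $\mathcal{C} = H^{\otimes n} D H^{\otimes n}$ that is hard to sample from in the noise-free setting under the stated conjectures (as supplied by~\cite{bremner16}), encode it into a larger IQP circuit $\mathcal{C}'$ on $N = O(n)$ qubits, and argue that any classical sampler for the noisy output of $\mathcal{C}'$, composed with the decoder, yields a classical approximate sampler for the original $p$ — contradicting the hardness of $p$ and hence collapsing the polynomial hierarchy.

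For the construction, recall that $\bracket{x}{\mathcal{C}}{0^n} = 2^{-n}\sum_z (-1)^{x\cdot z} f(z) =: \hat f(x)$, so $p_x = |\hat f(x)|^2$. Let $G \in \F_2^{N\times n}$ be the generator matrix of a linear code $C = \{Gx : x \in \F_2^n\}$, and define $\mathcal{C}' = H^{\otimes N} D' H^{\otimes N}$ with diagonal part $\bracket{w}{D'}{w} = f'(w) := f(G^T w)$. This is still an IQP circuit: each diagonal gate $e^{i\theta Z_{S}}$ building $D$ acts on bits of $u = G^T w$, and each $u_k$ is a fixed parity of the qubits $w$, so replacing every $Z$ by the product of $Z$'s over the corresponding column support of $G$ turns $D$ into a product of $\poly(n)$ diagonal gates on the $N$ qubits. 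A short Fourier computation then shows the output distribution is supported exactly on $C$ with the correct weights: substituting $f(z) = \sum_s \hat f(s)(-1)^{s\cdot z}$ and using $s\cdot(G^T w) = (Gs)\cdot w$ gives $\hat{f'}(y) = \sum_s \hat f(s)\,[\,y = Gs\,]$, so that $p'_{Gx} = p_x$ and $p'_y = 0$ for every non-codeword $y$.

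Next I would analyse the noise and decoding. By the equivalence already established in the text, the depolarising noise on $\mathcal{C}'$ flips each of its $N$ output bits independently with probability $\eta = \epsilon/2 < 1/2$; thus a noise-free codeword $Gx$ becomes $Gx + e$ with $e$ a Bernoulli$(\eta)$ error vector. For any fixed $\epsilon < 1$ we have $\eta < 1/2$, so I would take $C$ from a family of explicit, positive-rate, efficiently decodable codes operating below the capacity $1 - H(\eta)$ of the associated binary symmetric channel (e.g.\ expander or polar codes); this gives $N = O(n)$ and a poly-time decoder $\mathrm{Dec}$ whose failure probability on a Bernoulli$(\eta)$ error is $2^{-\Omega(N)}$. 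Writing $\widetilde{p}$ for the noisy output of $\mathcal{C}'$, it follows that $\| \mathrm{Dec}(\widetilde{p}) - p \|_1 \le 2^{-\Omega(N)}$, and since applying a fixed map cannot increase $\ell_1$ distance, $\| \mathrm{Dec}(\widetilde{p}') - p \|_1 \le \| \widetilde{p}' - \widetilde{p} \|_1 + 2^{-\Omega(N)} \le \delta + 2^{-\Omega(N)}$ for any $\widetilde{p}'$ with $\|\widetilde{p}' - \widetilde{p}\|_1 \le \delta$.

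To finish, I would choose $\delta$ to be a constant strictly below the $\ell_1$ threshold at which~\cite{bremner16} proves hardness of sampling from $p$, leaving room for the $2^{-\Omega(N)}$ slack. Then a polynomial-time classical algorithm sampling from all such noisy IQP circuits, applied to $\mathcal{C}'$ and followed by the poly-time decoder, would sample within $\ell_1$ distance below that threshold of the original $p$, which under Conjecture~\ref{con:isingorig} or~\ref{con:poly} forces the polynomial hierarchy to collapse to its third level. The main obstacle, and the real content of the proof, is the encoding step: verifying that the code can be absorbed into the diagonal phase while keeping $\mathcal{C}'$ an IQP circuit and reproducing $p$ \emph{exactly} on the codewords, together with invoking explicit, efficiently decodable codes of positive rate for every noise level $\eta < 1/2$, so that $N$ stays $O(n)$ and the decoding stays polynomial.
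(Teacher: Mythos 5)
Your proposal is correct and follows essentially the same route as the paper's own proof: the paper likewise folds the generator matrix of a classical linear code into the diagonal part of the circuit (replacing the X-program matrix $C$ by $CM$, equivalently passing from $f$ to $f_M(x) = f(Mx)$), observes via the same Fourier computation that the encoded circuit's output distribution sits exactly on the codewords with the original weights, and notes that the depolarising noise acts as independent bit-flips with probability $\epsilon/2 < 1/2$, so an efficient decoder for a positive-rate code (the paper cites LDPC and concatenated codes, or even a repetition code with $O(\log n)$ overhead) recovers a distribution $\ell_1$-close to $p$. Composing any hypothetical polynomial-time sampler for the noisy encoded circuits with this decoder then yields a classical approximate sampler for the noise-free distributions, which contradicts the hardness results of~\cite{bremner16} exactly as you argue.
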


Local noise more general than that arising from single-qubit depolarising channels may also be dealt with via our method of classical error correction.  Writing $x$ for a sample from the noise-free distribution $p$, and $x+e$ for a sample from $\widetilde{p}$, we see that $e$ is distributed such that $\Pr[e=e'] = (\epsilon/2)^{|e'|}(1-\epsilon/2)^{n-|e'|}$. But in fact we show in Section \ref{sec:ft} that \emph{any} local noise model that makes $e$ overwhelmingly likely to have small Hamming weight would equally well be tolerated by the incorporation of classical error correction.

Thinking of an IQP circuit as a Hamiltonian which is diagonal in the X basis, the error-correction approach we use can be viewed as encoding the terms in the Hamiltonian with a classical error-correcting code. The idea of encoding a Hamiltonian in this way with a classical or quantum code has previously been used to protect adiabatic quantum algorithms against noise (see~\cite{marvian16} and references therein). In the setting of IQP, the analysis becomes particularly clean and simple.


\subsection{Related work and perspective}

{\bf Circuit depth and optimal sparse IQP sampling.} Below we improve on the results of \cite{bremner16} to extend the hardness results of IQP sampling introduced in \cite{bremner16} to sparsely connected circuits. The motivation for this is both theoretical and practical. We want to both improve the likelihood that the hardness conjectures that are made are correct, while also decreasing the physical requirements of the IQP sampling protocol.

The complexity of dense IQP sampling depends on the conjecture that average-case complexity of complex temperature Ising model partition functions over dense graphs is \#P-hard. That is, that the average and worst case complexities coincide for a large fraction of randomly chosen graphs. It is natural to assume that the complexity of combinatorial problems on graphs increases with the density of the graph instances, however this is known to not always be the case. A number of key combinatorial problems that do not generally admit (classical) polynomial time approximation schemes do admit such approximations on dense instances \cite{arora99, alon94}. While these results do not hold for the hardness conjectures made in \cite{bremner16}, they are a clear incentive to determine to what extent the IQP sampling argument can be applied to Ising models on sparse graphs.

In \cite{bremner11} it was shown that IQP sampling, up to relative errors, could not be efficiently performed classically without a collapse in the polynomial hierarchy. It was also noted in \cite{bremner11} that this result still holds for IQP circuits with only nearest neighbour gates arranged on a 2d lattice. If this result could be extended to apply to classical simulations that are reasonably close in total variation distance it would be a massive improvement over the results of \cite{bremner16}. Such circuits could be implemented in constant depth with nearest neighbour interactions, suggesting an exceptional target for quantum supremacy experiments. Unfortunately, the techniques used in~\cite{aaronson13, bremner16} to argue for hardness of simulation up to small total variation distance require the output probability distribution of the circuit to ``anticoncentrate" with high probability, i.e.\ to be spread out, and it does not appear that IQP circuits on a square lattice display sufficient anticoncentration for these techniques to be applicable.

Therefore, Theorem \ref{thm:square} is proven by showing that sparse circuits of $O(n \log n)$ long-range gates anticoncentrate, and then showing that such circuits can be implemented on a 2d square lattice of size $\sqrt{n}\times\sqrt{n}$ in depth $O(\sqrt{n}\log n)$. Recent results relating lower bounds for computing sparse Tutte polynomials to the exponential time hypothesis demonstrate that this is likely close to the optimal depth. Last year it was shown that precise evaluations of Tutte polynomials on sparse graphs cannot be performed in time $\exp(o(n))$ without a violation of the counting equivalent of the exponential time hypothesis \cite{brand16, curticapean15}. That is, if there were a sub-exponential runntime for Tutte polynomials on sparse graphs at all \#P-hard points, then key NP-hard problems such as 3SAT could also be solved in sub-exponential time. The Ising models studied here are examples of \#P-hard points of complex-variable Tutte polynomials \cite{shepherd10}. Tensor network contraction techniques can be used to show that any output probability of any quantum circuit of depth $D$ implemented on a 2d square lattice can be precisely evaluated classically in time $O(2^{D\sqrt{n}})$~\cite{markov08}, suggesting that if it were possible to implement arbitrary sparse IQP circuits in depth $o(\sqrt{n})$ then we are likely to violate the exponential time hypothesis. 

The question remains if it is possible to identify a sampling problem that matches the $O(\sqrt{n})$ depth bound while also remaining classically difficult to simulate. Recent numerical studies indicate that it might be possible to find random circuits that are drawn from universal gate sets that anticoncentrate with depth $O(\sqrt{n})$ on a 2d square lattice \cite{boixo16}. However, an analytic proof that this is possible remains an open question. Finally, it should be noted that a recent paper has suggested that hardness of approximate IQP sampling up to small total variation distance could be proven for IQP circuits that do not necessarily satisfy the anticoncentration property~\cite{gao17}. In this work, the anticoncentration property is replaced with the assumption that most amplitudes corresponding to the results of measurements applied to a 2D ``brickwork'' state, which is universal for measurement-based quantum computing, are hard to approximately compute. The approach of~\cite{gao17} leads to a lower-depth circuit than ours, but with a polynomial increase in the number of qubits; and as the hardness assumption used is somewhat different, the results are not directly comparable with ours. Subsequently to the first version of this paper, Bermejo-Vega et al.~\cite{bermejovega17} have described several other constant-depth architectures which have a similar polynomial increase in size, but whose hardness is based on conjectures closer to those we use here.

{\bf Hardness results for noisy IQP.} It was recently shown by Fujii and Tamate~\cite{fujii16}, using the theory of quantum fault-tolerance, that the distributions produced by IQP circuits are classically hard to simulate, even under a small amount of noise. That is, a quantum channel $\mathcal{N}$ is applied to each qubit of the output state such that $\|\mathcal{N} - \operatorname{id}\|_\diamond \le \epsilon$ for a sufficiently small constant $\epsilon$. Fujii and Tamate~\cite{fujii16} show that the resulting distribution cannot be sampled from classically unless the polynomial hierarchy collapses to the third level. Theorem \ref{thm:main} may appear to be in conflict with this result; however, this is not the case. Fujii and Tamate's result shows that it is classically hard to sample from the noisy output distribution $\widetilde{p}$ of arbitrary IQP circuits up to a small relative error in each probability. Theorem \ref{thm:main} shows that for random IQP circuits these distributions can nevertheless be sampled from approximately, if the notion of approximation used is $\ell_1$ distance.

Note that the notion of multiplicative approximation used in~\cite{fujii16} (and also~\cite{bremner11}) is a very strong one: for example, if any of the output probabilities are 0, this 0 must be reproduced exactly in the sampled distribution. By contrast, the $\ell_1$ distance is a physically realistic measure of distance. For example, if $\| \widetilde{p}' - \widetilde{p} \|_1 \le \epsilon$, $\Omega(1/\epsilon)$ samples are required to distinguish between $\widetilde{p}'$ and $\widetilde{p}$. The framework of relative-error approximation appears naturally in~\cite{fujii16} because that work applies standard quantum fault-tolerance within a postselected version of IQP, and approximation up to small $\ell_1$ error does not combine well with postselection. In order to show that noisy versions of IQP circuits are hard to simulate up to small $\ell_1$ error, it appears necessary to use a notion of fault-tolerance which is itself native to IQP, as we do in Theorem \ref{thm:main2}.

It was shown in~\cite{bremner16} that classical sampling from the output distribution of random IQP circuits up to $\ell_1$ distance smaller than a universal constant $c$ is hard, assuming either of two reasonable average-case hardness conjectures. Again, this is not in conflict with the classical simulation results given here: applying noise to the output distributions of the circuits which are hard to simulate in~\cite{bremner16} could change them dramatically. Indeed, if depolarising noise with rate $\epsilon$ is applied to each qubit of an $n$-qubit quantum state, the distance between the resulting state and the original state could be as high as $\Omega(n \epsilon)$. So a constant amount of noise on each qubit is easily sufficient to leave the regime which was shown to be hard in~\cite{bremner16}.

The results obtained here are compared with previously known results in Table \ref{tab:compare}.

\begin{table}
\begin{tabular}{|c|c|c|}
\hline & Multiplicative approximation & Additive approximation \\
\hline \hline Noise-free & Hard (if PH does not collapse)~\cite{bremner11} & Hard (w/ stronger complexity assumptions)~\cite{bremner16}\\
\hline Noisy & Hard (if PH does not collapse)~\cite{fujii16} & Hard (general circuits, similar assumptions) / \\
 &  & polynomial-time (random circuits) \\
\hline
\end{tabular}
\caption{Comparison of hardness results for simulating IQP circuits classically. ``Multiplicative approximation'' means the task of sampling from the output distribution up to small relative error in each probability; ``additive approximation'' is the task of sampling from the output distribution up to small $\ell_1$ distance. ``Noisy'' means depolarising noise with rate $\epsilon$ applied to each qubit of the output state, for some small fixed $\epsilon > 0$. PH is short for ``polynomial hierarchy''.}
\label{tab:compare}
\end{table}

{\bf Classical simulation of general quantum circuits.} The theory of quantum fault-tolerance states that there is a constant noise threshold below which universal quantum computation is possible. A number of authors have provided converses to this, in a variety of different models~\cite{razborov04,virmani05,buhrman06a,kempe08}. These works show that, if a quantum circuit experiences sufficient noise, either it is simulable classically, or its output is essentially independent of its input. Perhaps the most relevant of these results to the case of IQP circuits is that of Razborov~\cite{razborov04}, which considers arbitrary quantum circuits containing gates of fan-in at most $k$, and a model where depolarising noise with rate larger than $1-1/k$ is applied to each qubit after each layer of gates in the circuit. It is shown in~\cite{razborov04} that, after $O(\log n)$ layers of gates, the output state of $n$ qubits produced by the circuit essentially does not depend on the input to the circuit. Theorem \ref{thm:main} is a rare case where there is no threshold noise rate: there is a classical algorithm which approximately samples from the output distribution for {\em any} noise rate $>0$. This does not contradict standard fault-tolerance results, because fault-tolerance techniques have not been applied to the IQP circuits which are classically simulable.

{\bf Boson sampling.} The {\em boson sampling} problem of Aaronson and Arkhipov~\cite{aaronson13} is defined as follows. For an $m \times n$ column-orthonormal matrix $U$, approximately sample from the distribution $\mathcal{D}_{\operatorname{bs}}$ on sequences $S = (s_1,\dots,s_m)$, where the $s_i$ are nonnegative integers which sum to $n$, given by
\be \label{eq:bs} \Pr[S] = \frac{|\operatorname{Perm}(U_S)|^2}{s_1!\dots s_m!} \ee
where $U_S$ is the $n \times n$ submatrix of $U$ containing $s_i$ copies of the $i$'th row of $U$, for all $i = 1,\dots m$, and $\operatorname{perm}(U_S)$ is the permanent of $U_S$. 

Kalai and Kindler have given evidence that suggests that, for small errors in the matrix $U$, boson sampling is classically simulable~\cite{kalai14} (see also~\cite{kalai16}, and~\cite{rahimikeshari16} for a recent study of more physically-motivated noise models). To be precise, they show the following. Let $X$ and $U$ be random Gaussian matrices ($n \times n$ matrices whose entries are picked from a normalised Gaussian distribution), and set $Y = \sqrt{1-\epsilon} X + \sqrt{\epsilon} U$ for some $\epsilon = \omega(1/n)$. Write $f(X) = |\operatorname{perm}(X)|^2$, $g(X) = \E[|\operatorname{perm}(Y)|^2 | X ]$. Then, for any $d \gg 1/\epsilon$, there is a degree-$d$ polynomial $p$ such that $\|p(X) - g(X)\|_2^2 = o(\|g\|_2^2)$, and $p$ can be efficiently approximated classically to within a constant.

It was also shown by Leverrier and Garc\'ia-Patr\'on~\cite{leverrier15}, and independently Kalai and Kindler~\cite{kalai14}, that, for considerably smaller levels of imperfection (e.g.~$\epsilon \gg 1/n$), the output of the boson sampling circuit is far from the intended output. Note that, in the intermediate regime $\epsilon = o(1)$, $\epsilon \gg 1/n$, the output of the circuit could still be hard to approximate while being far from the intended output. On the other hand, it was shown by Arkhipov~\cite{arkhipov15} (see also~\cite{shchesnovich15}) that if $\epsilon = o(1/n^2)$, the $\ell_1$ distance between the noisy distribution and the original distribution is $o(1)$.

As discussed in~\cite{kalai14}, the results of Kalai and Kindler do not quite imply that the boson sampling problem as described in~\cite{aaronson13} can be solved classically with a sufficiently large (but constant) amount of noise. The results of~\cite{kalai14} cannot simply be averaged over $S$ to obtain a similar low-degree polynomial approximation to $\mathcal{D}_{\operatorname{bs}}$, as they do not take the normalisation term in (\ref{eq:bs}) into account, nor the possibility of repeated columns in $S$. Nevertheless, they provided the first rigorous evidence that boson sampling in the presence of noise could be classically simulable. Based on this evidence, it was conjectured in~\cite{kalai16} that ``small noisy quantum circuits and other similar quantum systems'' could be approximated by low-degree polynomials. The present work proves this conjecture for the first time for a nontrivial class of quantum circuits, using similar ``noise sensitivity'' ideas to~\cite{kalai14}.

{\bf The noise model.} Noise models are deeply specific to any given implementation of a quantum computation. The noise model considered in this paper is relatively simple, where a perfect implementation of the desired circuit is followed by independent depolarising noise on each qubit in the circuit. As this is at the end of the circuit, it results in independent bitflip noise on each qubit. 

Despite the simplicity, it is a reasonable testbed for several physically relevant scenarios. A common noise model, and the model in which the fault-tolerance threshold theorem is proven, would have noise applied before and after every gate in the circuit, rather than just at the beginning or end as here. If the intermediate errors are dephasing errors, then this scenario is equivalent to the model studied in this paper. This follows from two facts. Firstly, sequential dephasing maps compose into another dephasing map, albeit one with a higher probability of error. The second key feature is that dephasing maps commute with the diagonal gates in an IQP circuit. These can be ``commuted through'' the Hadamard gates to produce bitflip channels. Finally, dephasing at the beginning and end of the circuit is not observable.

The results of Section \ref{sec:ft} demonstrate that IQP circuits can be made fault tolerant to dephasing errors using only marginally larger IQP circuits. However it is not clear that more general noise models, for example those allowing for depolarising errors between gates, can be made correctable within IQP (unless of course the entire IQP circuit is trivially regarded as a single gate acting on the whole system). For example, consider a circuit made up of CZ gates, each of which has depolarising noise applied to both of its qubits before and after the gate (call these NCZ gates). Then NCZ gates do not commute with one another, even when applied to the initial state $\ket{+}^{\otimes n}$. This opens up the intriguing possibility that noise could actually {\em increase} the power of IQP circuits, by allowing them to sample from otherwise inaccessible distributions.

{\bf Perspective on these results and quantum supremacy.} We feel that our results highlight the challenges for quantum supremacy experiments in the presence of noise, and also the challenges for skeptics attempting to prove that quantum supremacy is impossible. In the case of IQP circuits that are apparently hard to simulate classically, then if no fault-tolerance is used, the circuits can be simulated in polynomial time if there is a very small amount of noise. On the other hand, correcting noise of a rather natural form can be achieved using only classical ideas, with no need for the full machinery of quantum fault-tolerance, and only a small increase in the size of the circuit. The setting of IQP serves as a simple laboratory in which to explore these issues, which we expect will also apply to other proposed experiments. Another important challenge, as for all sampling problems, is to find a simple method for verifying that an experimental implementation of IQP sampling has been correctly implemented. An IQP verification procedure was proposed in~\cite{hangleiter17}, but this requires the preparation of states going beyond the IQP model.

We finally remark that, although our classical simulation of noisy IQP circuits runs in polynomial time, it is not remotely efficient in practice for reasonable noise rates (e.g.\ $\epsilon \approx 0.01$), as the runtime exponent depends linearly on $1/\epsilon$. A suitable experiment could still demonstrate quantum supremacy over this algorithm even without an exponential separation being possible.


\section{Sparse IQP circuits}
\label{sec:sparseIQP}
In this section we discuss how to parallelise IQP circuits and implement them on a square lattice. The first step is to replace the ``dense'' IQP circuits studied in~\cite{bremner16} with a sparser type of circuit, which will be easier to parallelise. We consider the following method of choosing the diagonal part of a random IQP circuit $\mathcal{C}$ on $n$ qubits:

\begin{itemize}
\item For each possible choice of a pair $(i,j)$ of distinct qubits, include a gate across those qubits with probability $p = \gamma (\ln n) / n$, for some fixed $\gamma > 0$.
\item Each 2-qubit gate is picked uniformly at random from the set $\{\diag(1,1,1,\omega^k): k \in \{0,\dots,3\}\}$, where $\omega = i$.
\item Each qubit has a 1-qubit gate acting on it, which is picked uniformly at random from the set $\{\diag(1,\zeta^k): k \in \{0,\dots,7\}\}$, where $\zeta = e^{\pi i/4}$.
\end{itemize}

Call an IQP circuit picked from this distribution {\em sparse}. Sparse IQP circuits contain $O(n \log n)$ gates with high probability and are a variant of the ``Ising-like'' class of IQP circuits considered in~\cite{bremner16}. Indeed, for any circuit $\mathcal{C}$ of the above form, we have
\[ \bracket{0}{\mathcal{C}}{0} = \sum_{x \in \{0,1\}^n} \zeta^{\sum_{i<j} w_{ij} x_i x_j + \sum_k v_k x_k} \]
for some integer weights $w_{ij}$, $v_k$: this is easily seen to correspond to an Ising model partition function $Z_{\mathcal{C}}(\zeta)$ (cf.~(\ref{eq:isingorig})). We will need the following key technical lemma, a sparse counterpart of anticoncentration results proven in~\cite{bremner16}.

\begin{lem}
\label{lem:anticonc}
Let $\mathcal{C}$ be a random sparse IQP circuit. Then $\E_{\mathcal{C}}[|\bracket{0}{\mathcal{C}}{0}|^2] = 2^{-n}$ and, for a large enough constant $\gamma$, $\E_{\mathcal{C}}[|\bracket{0}{\mathcal{C}}{0}|^4] \le 5 \cdot 2^{-2n}$.
\end{lem}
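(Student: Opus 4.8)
The plan is to compute the two moments directly, exploiting that $\bracket{0}{\mathcal{C}}{0}$ is an Ising partition function divided by $2^n$. Writing $H^{\otimes n}\ket{0} = 2^{-n/2}\sum_x \ket{x}$ and using that $D$ is diagonal gives $\bracket{0}{\mathcal{C}}{0} = 2^{-n}\sum_{x \in \{0,1\}^n} f(x)$, where $f(x) = \zeta^{\sum_{i<j} w_{ij} x_i x_j + \sum_k v_k x_k}$, the even edge-weight $w_{ij} \in \{0,2,4,6\}$ is $0$ with probability $1-p$ and $2k$ ($k$ uniform in $\{0,\dots,3\}$) with probability $p$, and $v_k$ is uniform in $\{0,\dots,7\}$. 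The mechanism I would use throughout is that these weights are independent across gates, so the expectation of a product of phases factorises into one-variable character averages, each of which vanishes unless its exponent is $\equiv 0$ modulo the order of the relevant root of unity.

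For the second moment I would expand $\E_{\mathcal{C}}[|\bracket{0}{\mathcal{C}}{0}|^2] = 2^{-2n}\sum_{x,y}\E_{\mathcal{C}}[f(x)\overline{f(y)}]$ and factorise. The vertex factor at coordinate $k$ is $\E[\zeta^{v_k(x_k-y_k)}]$, which equals $1$ if $x_k = y_k$ and $0$ otherwise (a primitive $8$th root averaged over $x_k-y_k \in \{-1,0,1\}$). Hence only the diagonal $x=y$ survives, where every edge factor is $1$, giving exactly $2^{-2n}\cdot 2^n = 2^{-n}$.

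The fourth moment is the substance. I would write $\E_{\mathcal{C}}[|\bracket{0}{\mathcal{C}}{0}|^4] = 2^{-4n}\sum_{x,y,u,s}\E_{\mathcal{C}}[f(x)\overline{f(y)}f(u)\overline{f(s)}]$ and factorise again. The vertex factor now forces $x_k + u_k = y_k + s_k$ at every coordinate $k$, leaving exactly six admissible local patterns for $(x_k,y_k,u_k,s_k)$, which I would sort into three super-types: two \emph{inert} patterns ($0000$, $1111$), a \emph{group~I} pair, and a \emph{group~II} pair. A short case check of the edge exponent $d_{ij} = x_ix_j + u_iu_j - y_iy_j - s_is_j \in \{-2,\dots,2\}$ (so the edge factor is $1 - p\,\mathbf{1}[d_{ij}\neq 0]$) shows that $d_{ij}\neq 0$ precisely when one of $i,j$ is group~I and the other group~II; hence the number of active edges is exactly $N = n_1 n_2$, where $n_1,n_2$ count the group~I and group~II coordinates. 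Collecting the two internal choices within each super-type (a factor $2^n$), the whole sum collapses to
\be
\E_{\mathcal{C}}[|\bracket{0}{\mathcal{C}}{0}|^4] = \frac{1}{2^{3n}}\sum_{a,b\ge 0}\binom{n}{a,\,b,\,n-a-b}(1-p)^{ab}.
\ee

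The terms with $ab=0$ are exactly the two ``trivial'' families and contribute $2\cdot 2^{-2n} - 2^{-3n}$, so it remains to show that $\Sigma := \sum_{a,b\ge 1}\binom{n}{a,b,n-a-b}(1-p)^{ab} \le 3\cdot 2^n$ for large $\gamma$; I in fact expect $\Sigma = o(2^n)$. \emph{This tail estimate is the main obstacle.} Summing out $b$ gives $\Sigma = \sum_{a\ge 1}\binom{n}{a}[(1+(1-p)^a)^{n-a}-1]$, and the difficulty is that the obvious bounds are exponentially too weak in the intermediate regime: replacing $ab$ by $a+b$ discards the quadratic suppression and leaves a stray $(3/2)^n$, while bounding $\binom{n}{a}\le 2^n$ near $a\approx n/2$ leaves $4^n e^{-\Theta(n)} > 2^n$. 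The resolution is that the penalty $(1-p)^{ab}$ with $p = \gamma(\ln n)/n$ is just strong enough to beat the entropy $\binom{n}{a}$ once the two are balanced. I would split the $a$-sum at $a\sim n/(\gamma\ln n)$ and $a\sim n/\gamma$: for small $a$, $1-e^{-pa}\ge pa/2$ yields geometric decay $\lesssim 2^n n^{a(1-\gamma/4)}$; for $a\gtrsim n/\gamma$, the inequality $(1+t)^{n-a}-1 \le (n-a)t\,e^{(n-a)t}$ with $(n-a)(1-p)^a = O(1)$ reduces the tail to $\sum_a \binom{n}{a}e^{-pa} = (1+e^{-p})^n \le 2^n n^{-\gamma/4}$; and in the middle band the per-coordinate comparison $H(a/n) + n^{-\gamma a/n} < \ln 2$ forces each term below $2^{cn}$ with $c<1$. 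Each piece is $o(2^n)$ for $\gamma$ a large enough constant, giving $\E_{\mathcal{C}}[|\bracket{0}{\mathcal{C}}{0}|^4] \le (2+o(1))\,2^{-2n} \le 5\cdot 2^{-2n}$ as required.
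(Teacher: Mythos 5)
Your proposal is correct, and it takes a genuinely different, self-contained route to the same bounds. The paper never expands the fourth moment from first principles: it imports a three-variable identity for $\E_{\mathcal{C}}[|\bracket{0}{\mathcal{C}}{0}|^4]$ from the last appendix of~\cite{bremner16}, reduces to two variables by bit-flip invariance, classifies coordinates by the pattern $(x_i,y_i)$ with counts $(a,b,c,d)$, obtains the active-gate count $\binom{b}{2}+bc+bd+cd$, discards the $\binom{b}{2}$ term, and then splits into just two cases according to whether $\max\{b,c,d\}\le\alpha n$, where $\alpha$ is chosen so that $\binom{n}{\alpha n}\le 2^{n/3}/(n+1)$ and $\gamma=4/\alpha$. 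You instead keep all four summation variables and use the full order-$8$ character average to impose the exact constraint $x_k+u_k=y_k+s_k$; your grouping claim is verifiable and true (taking group I $=\{(0,0,1,1),(1,1,0,0)\}$ and group II $=\{(0,1,1,0),(1,0,0,1)\}$ in $(x_k,y_k,u_k,s_k)$ order, the four cross pairs are exactly the active ones), and it yields an \emph{exact} identity with penalty exponent $ab$, structurally cleaner than $\binom{b}{2}+bc+bd+cd$. It is also genuinely sharper: at $n=1$ your formula gives the true value $3/8$, whereas the paper's imported formula evaluates to $1/2$, so the latter is only an over-estimate for this ensemble (harmless, since only an upper bound is needed). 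Your three-regime tail analysis plays the role of the paper's two-case split, and each regime checks out: for $pa\le1$ the inequality $1-e^{-pa}\ge pa/2$ gives terms bounded by $2^n n^{a(1-c\gamma)}$ for a constant $c>0$, which sum geometrically; in the middle band the uniform bound $H(a/n)+n^{-\gamma a/n}\le H(1/\gamma)+e^{-1}<\ln 2$ indeed holds once $\gamma$ is a moderate constant (e.g.\ $\gamma\ge 20$); and for $a\ge n/\gamma$ one has $(n-a)(1-p)^a\le 1$, so the tail collapses to $O(n)\,(1+e^{-p})^n\le O(n)\,2^n n^{-\gamma/4}$. All three pieces are $o(2^n)$, giving your conclusion $\E_{\mathcal{C}}[|\bracket{0}{\mathcal{C}}{0}|^4]=(2+o(1))\,2^{-2n}$, stronger than the stated $5\cdot 2^{-2n}$; your direct second-moment computation also fills in the step the paper dismisses as ``symmetry arguments''. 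What the paper's route buys is brevity (the hard combinatorial identity is cited rather than proved) and a slightly simpler two-case estimate; what yours buys is independence from the cited formula, an exact intermediate identity, and a tighter constant.
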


We prove Lemma \ref{lem:anticonc} in Appendix \ref{app:anticonc}. By the Paley-Zygmund inequality, which states that $\Pr[R \ge \alpha\, \E[R]] \ge (1-\alpha)^2 \E[R]^2/\E[R^2]$ for any random variable $R$ with finite variance and any $0 < \alpha < 1$, we have that, for a large enough constant $\gamma$, $\Pr[ |\bracket{0}{\mathcal{C}}{0}|^2 \ge \alpha\cdot 2^{-n} ] \ge (1-\alpha)^2/5$. We use this within the following result from~\cite{bremner16} (slightly rephrased):

\begin{cor}
\label{cor:addapprox}
Let $\mathcal{F}$ be a family of IQP circuits on $n$ qubits. Pick a random circuit $\mathcal{C}$ by choosing a circuit from $\mathcal{F}$ at random according to some distribution, then appending X gates on a uniformly random subset of the qubits. Assume that there exist universal constants $\alpha,\beta>0$ such that $\Pr[|\bracket{0}{\mathcal{C}}{0}|^2 \ge \alpha\cdot2^{-n}] \ge \beta$. Further assume there exists a classical polynomial-time algorithm $\mathcal{A}$ which, for any IQP circuit $\mathcal{C}'$ of this form, can sample from a probability distribution which approximates the output probability distribution of $\mathcal{C}'$ up to additive error $\epsilon = \alpha \beta / 8$ in $\ell_1$ norm. Then there is a $\FBPP^{\NP}$ algorithm which, given access to $\mathcal{A}$, approximates $|\bracket{0}{\mathcal{C}}{0}|^2$ up to relative error $1/4 + o(1)$ on at least a $\beta/2$ fraction of circuits $\mathcal{C}$.
\end{cor}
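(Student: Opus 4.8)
The plan is to follow the standard recipe linking approximate sampling to average-case approximate counting via Stockmeyer's theorem, with the ``hiding'' provided by the random $X$ gates doing the essential work. Throughout I write $p_{\mathcal{C},x} = |\bracket{x}{\mathcal{C}}{0}|^2$ for the ideal output probabilities and let $\mu$ denote the distribution over circuits $\mathcal{C}$ described in the statement. The structural fact I would exploit is that $\mu$ is \emph{invariant} under appending a fixed layer of $X$ gates: if $\mathcal{C} \sim \mu$ and $T \subseteq [n]$ is fixed, then $X_T \mathcal{C} \sim \mu$ as well, since the appended $X$-subset is uniform and $X_T$ merely relabels it. Moreover $p_{X_T\mathcal{C},x} = p_{\mathcal{C},x \oplus \mathbf{1}_T}$, so the quantity $p_{\mathcal{C},0^n} = |\bracket{0}{\mathcal{C}}{0}|^2$ for $\mathcal{C} \sim \mu$ is an exactly uniformly-random output probability of the underlying circuit from $\mathcal{F}$.

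First I would reduce to a shift-covariant simulator. Given the assumed sampler $\mathcal{A}$, whose output distribution $q_{\mathcal{C}}$ satisfies $\|q_{\mathcal{C}} - p_{\mathcal{C}}\|_1 \le \epsilon$ for every $\mathcal{C}$ in the support of $\mu$, I define a symmetrised sampler $\mathcal{A}'$ which on input $\mathcal{C}$ draws a uniform $T$, runs $\mathcal{A}$ on $X_T \mathcal{C}$ to obtain $y$, and outputs $y \oplus \mathbf{1}_T$. A short calculation with the triangle inequality shows $\mathcal{A}'$ still has $\ell_1$ error at most $\epsilon$ on every circuit, and a change of variables $T \mapsto T \oplus R$ shows its distribution $q'$ is shift-covariant, $q'_{X_R\mathcal{C},x} = q'_{\mathcal{C},x\oplus \mathbf{1}_R}$. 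This is the step I expect to carry the real weight: it is what allows the $\ell_1$ error budget, which is only a constant, to be spread evenly over all $2^n$ outputs rather than being adversarially concentrated onto the single coordinate $0^n$ that we care about.

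With covariance in hand, the expected error becomes coordinate-uniform. Combining shift-invariance of $\mu$ with covariance of $q'$ gives, for any two strings $x,x'$, equality of the expected per-coordinate errors $\E_{\mathcal{C}\sim\mu}[|q'_{\mathcal{C},x}-p_{\mathcal{C},x}|]$; summing the $2^n$ equal terms against $\E_{\mathcal{C}\sim\mu}\|q'_{\mathcal{C}}-p_{\mathcal{C}}\|_1 \le \epsilon$ then yields
\[
\E_{\mathcal{C} \sim \mu}\big[\,|q'_{\mathcal{C},0^n} - p_{\mathcal{C},0^n}|\,\big] \le \epsilon \, 2^{-n}.
\]
Markov's inequality now bounds the probability that the error at $0^n$ exceeds $(\alpha/4)2^{-n}$ by $4\epsilon/\alpha = \beta/2$, using $\epsilon = \alpha\beta/8$. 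Intersecting the complementary event with the anticoncentration event $\{p_{\mathcal{C},0^n} \ge \alpha 2^{-n}\}$, which has probability at least $\beta$, leaves a set of circuits of measure at least $\beta/2$ on which $q'_{\mathcal{C},0^n}$ approximates $p_{\mathcal{C},0^n}$ to within relative error $1/4$.

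Finally I would invoke Stockmeyer's approximate counting theorem. Since $q'_{\mathcal{C},0^n}$ is the acceptance probability of the polynomial-size randomised computation $\mathcal{A}'$, an $\FBPP^{\NP}$ machine can compute an estimate agreeing with it up to relative error $1/\poly(n) = o(1)$. Composing this $o(1)$ relative error with the $1/4$ relative error established above produces an $\FBPP^{\NP}$ algorithm approximating $|\bracket{0}{\mathcal{C}}{0}|^2$ to relative error $1/4+o(1)$ on at least a $\beta/2$ fraction of circuits, as required. The only genuinely delicate point is the symmetrisation and the resulting coordinate-uniformity of the error; once that is in place, the Markov and Stockmeyer steps are entirely routine.
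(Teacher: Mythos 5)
Your proof is correct and follows essentially the same route as the argument this corollary rests on (the paper itself cites \cite{bremner16} for it rather than reproving it): hide the target output behind the uniformly random $X$ gates so that the sampler's constant $\ell_1$ error budget translates into an average error of $\epsilon\,2^{-n}$ at the coordinate $0^n$, then apply Markov's inequality, intersect with the anticoncentration event, and finish with Stockmeyer approximate counting in $\FBPP^{\NP}$. Your explicit symmetrisation of the sampler is a clean way of making the hiding step rigorous when the appended $X$ gates are syntactically visible in the circuit description, but it is the same underlying idea as in the cited proof.
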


In this corollary, $\FBPP^{\NP}$ is the complexity class corresponding to polynomial-time classical randomised computation, equipped with an oracle to solve NP-complete problems. By inserting the parameters from Lemma \ref{lem:anticonc}, we see that there are universal constants $0 < \epsilon, c < 1$ such that the following holds: If there is a classical algorithm $\mathcal{A}$ which can sample from a probability distribution approximating the output probability distribution of any such circuit $\mathcal{C}'$ up to additive error $\epsilon$ in $\ell_1$ norm, then there is a $\FBPP^{\NP}$ algorithm which, given access to $\mathcal{A}$, approximates $|\bracket{0}{\mathcal{C}}{0}|^2$ up to relative error $1/4 + o(1)$ on at least a $c$ fraction of sparse IQP circuits $\mathcal{C}$. Note that, by a union bound, we can weaken the requirement that the algorithm $\mathcal{A}$ works for {\em all} such circuits $\mathcal{C}'$ to the requirement that it works for a large constant fraction of them, at the expense of reducing the constant $c$.

We conjecture that this latter problem is \#P-hard. This corresponds to approximating the partition function of the Ising model up to small relative error, for random graphs that are relatively sparse (yet still connected with high probability), which is a similar hardness assumption to one considered in~\cite{bremner16}. If this conjecture holds, then the existence of such a classical sampler would imply collapse of the polynomial hierarchy~\cite{toda91}, a complexity-theoretic consequence considered very unlikely; see~\cite{bremner16} for more.

Thus the conjecture that we make is as follows (cf.\ Conjecture \ref{con:isingorig}), where we choose $\alpha = 1/2$ in Corollary \ref{cor:addapprox} for concreteness, giving $\beta = 1/40$, $c=1/80$:

\begin{con}
\label{con:ising}
There is a universal constant $c < 1/80$ such that it is \#P-hard to approximate $|Z_{\mathcal{C}}(\zeta)|^2$ up to relative error $1/4 + o(1)$ for an arbitrary $c$ fraction of instances $\mathcal{C}$ picked from the above distribution.
\end{con}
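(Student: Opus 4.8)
Because Conjecture~\ref{con:ising} is advanced as a hardness assumption rather than a theorem, my aim is not a complete proof but to lay out the worst-case-to-average-case reduction one would naturally attempt and to isolate the step at which current techniques break down --- which is precisely the reason the statement is only conjectured. The overall template is the standard one used for \#P-hardness of sampling problems (as in the permanent-based arguments of~\cite{aaronson13}): combine known \emph{worst-case} \#P-hardness of the target quantity with a \emph{random self-reduction} that transports this hardness onto the random sparse ensemble, then feed the result into the $\FBPP^{\NP}$ machinery of Corollary~\ref{cor:addapprox}.

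First I would pin down the worst-case end. The quantity $|Z_{\mathcal{C}}(\zeta)|^2$ is, up to the factor $2^{2n}$, the output probability $|\bracket{0}{\mathcal{C}}{0}|^2$, and $Z_{\mathcal{C}}(\zeta)$ at $\zeta=e^{i\pi/4}$ is a \#P-hard point of the complex-variable Tutte polynomial~\cite{shepherd10}; the exponential-time-hypothesis lower bounds of~\cite{brand16,curticapean15} further indicate that exact evaluation stays hard when the graph is sparse, and passing to the modulus squared costs nothing essential. This gives an arbitrary worst-case instance whose value I want to compute using only an oracle that approximates $|Z_{\mathcal{C}}(\zeta)|^2$ on a random sparse $\mathcal{C}$.

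Second I would set up the random self-reduction. The key structural observation is that, writing $u_{ij}=\omega^{k_{ij}}$ for each possible edge and $t_k=\zeta^{v_k}$ for each vertex, the amplitude is \emph{multilinear} in these variables: since $x_ix_j\in\{0,1\}$, one has $\omega^{k_{ij}x_ix_j}=1+(u_{ij}-1)x_ix_j$, so $\bracket{0}{\mathcal{C}}{0}=2^{-n}\sum_x\prod_{(i,j)}\bigl(1+(u_{ij}-1)x_ix_j\bigr)\prod_k\bigl(1+(t_k-1)x_k\bigr)$ is a polynomial of degree $O(n\log n)$ in the edge and vertex phases. Restricting the phases to a random affine line through the worst-case point then makes the amplitude a univariate polynomial of polynomial degree, which I would try to recover by querying the average-case approximator at several points on the line and using robust (Berlekamp--Welch-type) interpolation to correct the queries on which it errs; Corollary~\ref{cor:addapprox} is what turns an $\ell_1$-approximate sampler into the needed relative-error approximator for $|\bracket{0}{\mathcal{C}}{0}|^2$.

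The hard part, and the reason Conjecture~\ref{con:ising} is not a theorem, is that three obstructions compound and no technique is known to surmount them simultaneously. (i) \emph{Staying on the ensemble}: a generic point of the interpolation line has phases $u_{ij}$ that are arbitrary complex numbers rather than powers of $\omega=i$, so the intermediate instances are not valid circuits from the distribution and the approximator's promise simply does not apply to them. (ii) \emph{Preserving sparsity and the weight law}: the target ensemble is an Erd\H{o}s--R\'enyi graph with edge probability $\gamma(\ln n)/n$ carrying weights drawn from $\{0,\dots,3\}$ and $\{0,\dots,7\}$, whereas naive perturbation populates every edge and corrupts the weight distribution, so one would need a reduction whose queried instances are statistically close to genuine sparse samples. (iii) \emph{Amplifying a very weak oracle}: the approximator need only achieve relative error $1/4+o(1)$ on an adversarially chosen $c$-fraction of instances, whereas robust interpolation tolerates only a small constant error rate and the \#P-hardness reduction ultimately needs exact values. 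Closing the gap between ``relative error $1/4$ on a tiny fraction of sparse instances'' and ``exact worst-case evaluation'' is exactly the content of the conjecture; it is the main obstacle, and since no worst-case-to-average-case reduction is presently known for these partition functions over the discrete sparse ensemble, I would state the result as a conjecture, exactly as the authors do.
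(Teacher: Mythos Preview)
Your proposal is appropriate: Conjecture~\ref{con:ising} is stated in the paper purely as a hardness assumption, with no proof attempted. The paper's only supporting remark is that relative-error approximation of $|Z_{\mathcal{C}}(\zeta)|^2$ is \#P-hard in the worst case (even for constant-depth IQP circuits), and that the bound $c<1/80$ arises from plugging $\alpha=1/2$, $\beta=1/40$ into Corollary~\ref{cor:addapprox} via Lemma~\ref{lem:anticonc}. There is therefore nothing in the paper to compare your argument against beyond this worst-case observation, which you also invoke.

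Your discussion goes further than the paper does: you sketch the natural random-self-reduction template (multilinearity in the edge and vertex phases, univariate restriction, robust interpolation) and correctly isolate the obstructions --- the interpolation points leave the discrete phase alphabet, sparsity and the weight law are not preserved under perturbation, and the oracle is far too weak for Berlekamp--Welch-style decoding. This is a reasonable and accurate account of why no proof is known, and it is consistent with the paper's treatment of the analogous dense conjectures (Conjectures~\ref{con:isingorig} and~\ref{con:poly}) in~\cite{bremner16}. Nothing in your write-up is wrong; it simply supplies context the paper leaves implicit.
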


It should be noted that finding such a relative-error approximation to $|Z_{\mathcal{C}}(\zeta)|^2$ is \#P-hard in the worst case even for constant-depth IQP circuits. Note that the precise value of $c$ is not very significant. The decrease in the bound on $c$ compared with Conjecture \ref{con:isingorig} is because the constant in Lemma \ref{lem:anticonc} is somewhat larger than in the equivalent result in~\cite{bremner16}.


\subsection{Parallelising IQP circuits}

Next we show that sparse IQP circuits can be parallelised efficiently, assuming that long-range interactions are allowed. An arbitrary IQP circuit whose gates act on at most 2 qubits can be implemented by first implementing the 2-qubit gates (combining multiple gates acting on the same qubits into one gate), and then implementing the 1-qubit gates in one additional layer. So consider an IQP circuit $\mathcal{C}$ on $n$ qubits, where each gate acts on 2 qubits, and such that there is at most one gate acting across each pair of qubits.

$\mathcal{C}$ can be implemented in depth $t$ if the gates can be partitioned into $t$ sets such that, within each set, no pair of gates ``collide'' (act on the same qubit). Let $G_{\mathcal{C}}$ be the corresponding graph on $n$ vertices which has an edge between vertices $i$ and $j$ if $\mathcal{C}$ has a gate between qubits $i$ and $j$. Then such a partition of $\mathcal{C}$ is equivalent to colouring the edges of $G_{\mathcal{C}}$ with $t$ colours such that no pair of edges incident to the same vertex share the same colour. Vizing's theorem~\cite{diestel10} states that any graph $G$ has a proper edge-colouring of this form with at most $\Delta(G)+1$ colours, where $\Delta(G)$ is the maximal degree of a vertex of $G$. So all that remains is to bound $\Delta(G_{\mathcal{C}})$ for a random sparse IQP circuit $\mathcal{C}$.

This is equivalent to bounding $\Delta(G)$ for a random graph $G$ such that each edge is present with probability $p = \gamma (\ln n) / n$. The maximum degree of random graphs has been studied in detail (see e.g.~\cite{bollobas80}); here we give an elementary upper bound.

\begin{lem}
Let $G$ be a random graph where each edge is present with probability $p = \gamma (\ln n) / n$. Then $\Pr[\Delta(G) \ge 2 \gamma (\ln n)] \le n^{1-\gamma/4}$.
\end{lem}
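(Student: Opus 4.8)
The plan is to bound the degree of each vertex individually by a Chernoff-type tail inequality, and then take a union bound over the $n$ vertices.

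First I would observe that, since each of the $n-1$ potential edges incident to a fixed vertex $v$ is present independently with probability $p$, the degree $d(v)$ of $v$ is a sum of $n-1$ independent Bernoulli$(p)$ variables, i.e.\ $d(v)$ has a binomial distribution with $n-1$ trials and mean $\mu = (n-1)p \le \gamma \ln n$. A union bound then reduces the whole statement to a single-vertex estimate: it suffices to show
\[ \Pr[d(v) \ge 2\gamma \ln n] \le n^{-\gamma/4}, \]
since summing this over the $n$ vertices gives $\Pr[\Delta(G) \ge 2\gamma \ln n] \le n \cdot n^{-\gamma/4} = n^{1-\gamma/4}$.

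For the single-vertex bound I would apply the standard exponential-moment (Chernoff) method. Writing $a = 2\gamma \ln n$ and optimising $\Pr[d(v) \ge a] \le \min_{t>0} e^{-ta}\,\E[e^{t\,d(v)}]$ via $\E[e^{t\,d(v)}] \le \exp(\mu(e^t-1))$ gives
\[ \Pr[d(v) \ge a] \le \exp\big(a - \mu - a\ln(a/\mu)\big). \]
Because the exponent on the right is increasing in $\mu$ on the range $\mu < a$, I can replace $\mu$ by its upper bound $\gamma \ln n = a/2$, which turns the exponent into $\gamma \ln n\,(1 - 2\ln 2)$. The claim then follows from the numerical fact $2\ln 2 - 1 > 1/4$, which yields $\Pr[d(v) \ge a] \le n^{-(2\ln 2 - 1)\gamma} \le n^{-\gamma/4}$.

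I expect the only real subtlety to be resisting the cruder union bound $\Pr[d(v) \ge k] \le \binom{n-1}{k} p^k$ over $k$-subsets of potential neighbours. With $k = 2\gamma \ln n$ this quantity behaves like $(e/2)^{2\gamma \ln n} = n^{2\gamma(1-\ln 2)}$, which \emph{grows} with $n$ and is therefore useless. The reason is that it discards the factor $(1-p)^{n-1-k} \approx e^{-\mu} = n^{-\gamma}$ coming from the requirement that the remaining potential edges be absent, and it is precisely this suppression that the exponential-moment bound retains. Once the Chernoff estimate is in hand, the remainder is just the elementary constant check above.
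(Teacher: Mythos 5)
Your proposal is correct and follows essentially the same route as the paper's proof: a union bound over the $n$ vertices reducing the claim to a single-vertex tail bound, followed by a Chernoff estimate giving $\Pr[\deg(v) \ge 2\gamma \ln n] \le n^{-\gamma/4}$. The only difference is that the paper simply cites the Chernoff bound with the constant $1/4$, whereas you carry out the exponential-moment optimisation explicitly and obtain the sharper exponent $(2\ln 2 - 1)\gamma$ before weakening it to $\gamma/4$.
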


\begin{proof}
By a union bound, for any $d$, $\Pr[\Delta(G) \ge d] \le n \Pr[\text{deg}(v_1) \ge d]$, where $\text{deg}(v_1)$ is the degree of a fixed vertex $v_1$. The degree of $v_1$ is the number of edges incident to $v_1$; each edge is present with probability $p$; so by a Chernoff bound argument~\cite{dubhashi09}
\[ \Pr[\text{deg}(v_1) \ge 2 \gamma (\ln n)] \le e^{-\gamma (\ln n) / 4} = n^{-\gamma / 4}. \]
The claim follows.
\end{proof}

So, for a large enough constant $\gamma$, the probability that $\Delta(G) \ge 2 \gamma (\ln n)$ is negligible. Note that, in this regime, with high probability $G$ is connected and has maximal treewidth, implying that it is not obvious how to simulate $\mathcal{C}$ classically using tensor-contraction techniques~\cite{markov08}.

We can therefore parallelise a random IQP circuit containing $O(n\log n)$ gates to depth $O(\log n)$, which is optimal. It is worth comparing this to the  bounds obtained in~\cite{brown15} for parallelising general quantum circuits. There it was shown that a random circuit of depth $t$ can be parallelised to depth $O(t (\log n)/n)$ with high probability. Here we have removed a log factor by taking advantage of our ability to commute gates through each other.


\subsection{Sorting networks}

We finally show how to implement sparse IQP circuits depth-efficiently on a 2d square lattice. Consider an arbitrary quantum circuit $\mathcal{C}$ on $n$ qubits of depth $t$. We would like to implement $\mathcal{C}$ on a 2d square lattice of $\sqrt{n} \times \sqrt{n}$ qubits. It is known~\cite{beals13} that, for any geometric arrangement of $n$ qubits, {\em sorting networks} on that geometry correspond to efficient implementations of quantum circuits in that geometry. A sorting network on $n$ elements is a kind of circuit on $n$ lines, where each line is thought to carry an integer, and each gate across two lines is a comparator which swaps the two integers if they are out of order. Sorting networks are designed such that, at the end of the sorting network, any input sequence will have been sorted into ascending order. The aim is to minimise the depth of the network, while possibly obeying geometric constraints (such as comparisons needing to occur across nearest neighbours in some lattice geometry).

We briefly sketch the argument that sorting networks give efficient implementations of circuits on particular geometries~\cite{beals13}. Imagine we have a sequence of non-nearest-neigbour 2-qubit gates to apply in parallel, each (necessarily) acting on distinct qubits, but that we are only allowed to perform nearest-neighour gates (in some geometry). To perform this sequence, it is sufficient to rearrange the qubits such that each pair across which we want to apply a gate is adjacent, then perform the gates (in parallel), then rearrange the qubits to their original order. To do this, we would like to perform a certain permutation of the qubits using only SWAP gates, where each SWAP gate acts across nearest neighbours.
 
This is almost exactly what sorting networks achieve. Each gate in a sorting network can be thought of as a controlled-SWAP, where the values in the two lines are swapped if they are in the incorrect order. To produce a circuit of SWAPs from a sorting network to achieve a desired permutation $\sigma$, we can feed in the sequence $\sigma^{-1}(1),\dots,\sigma^{-1}(n)$ to the network. Whenever a gate is applied to two integers which are currently out of order, we represent it in the circuit by a SWAP gate; otherwise, we do not include it. Assuming that the sorting network works correctly, it will map $\sigma^{-1}(1),\dots,\sigma^{-1}(n)$ to $1,\dots,n$, or in other words will perform the permutation $\sigma$. Any geometric constraints obeyed by the comparators in the sorting network will also be obeyed by the network of SWAPs.

It was shown in~\cite{schnorr86} that there exists a sorting network on a 2d $\sqrt{n} \times \sqrt{n}$ lattice which has depth $3\sqrt{n} + o(\sqrt{n})$; this is close to optimal by diameter arguments. Therefore, any quantum circuit of depth $t$ on $n$ qubits can be implemented on a 2d square lattice of $\sqrt{n} \times \sqrt{n}$ qubits in depth $O(t \sqrt{n})$. Putting all the above pieces together, we have completed the proof of Theorem \ref{thm:square}: there is a family of quantum circuits on $n$ qubits where with high probability a circuit picked from the family contains $O(n \log n)$ 2-qubit commuting gates and can be implemented on a 2d square lattice in depth $O(\sqrt{n} \log n)$; and a constant fraction of circuits picked from the family are hard to simulate classically, assuming similar conjectures to~\cite{bremner16}. Restating Theorem \ref{thm:square} more formally:

\begin{repthm}{thm:square}
Assume Conjecture \ref{con:ising}. Then there is a distribution $\mathcal{D}$ on the set of commuting quantum circuits on $n$ qubits and universal constants $q,\epsilon > 0$ such that: with high probability, a circuit picked from $\mathcal{D}$ contains $O(n \log n)$ 2-qubit commuting gates and can be implemented on a 2d square lattice in depth $O(\sqrt{n} \log n)$; and a $q$ fraction of circuits picked from $\mathcal{D}$ cannot be simulated classically unless the polynomial hierarchy collapses to the third level. Here ``simulate'' means to approximately sample from the output distribution of the circuit, up to $\ell_1$ distance $\epsilon$.
\end{repthm}


\section{Approximating the output probability distribution of noisy IQP circuits}

We now turn to giving a classical algorithm for approximately simulating noisy IQP circuits. We prove that, in many cases, noisy probability distributions $\widetilde{p}$ produced by IQP circuits are approximately classically simulable (Theorem \ref{thm:main}) by showing the following, for any fixed $\delta > 0$:
\begin{enumerate}
\item We can calculate a description of a function $\widetilde{q}$ which approximates $\widetilde{p}$ up to $\ell_1$ error $\delta$, and which has only $\poly(n)$ Fourier coefficients over $\Z_2^n$.

\item We can calculate all marginals of the function $\widetilde{q}$ exactly and efficiently.

\item This enables us to sample from a probability distribution $\widetilde{p}'$ which approximates $\widetilde{p}$ up to $\ell_1$ error $O(\delta)$.
\end{enumerate}
In order to show all these things, we will use some basic ideas from Fourier analysis of boolean functions~\cite{odonnell14}. Any function $f:\{0,1\}^n \rightarrow \C$ can be expanded in terms of the functions $\chi_s(x) = (-1)^{s \cdot x} = (-1)^{\sum_i s_i x_i}$ as
\[ f = \sum_{s \in \{0,1\}^n} \hat{f}(s) \chi_s; \]
the values $\hat{f}(s)$ are called the Fourier coefficients of $f$. It is easy to show that
\[ \hat{f}(s) = \frac{1}{2^n} \sum_{x \in \{0,1\}^n} f(x) (-1)^{s \cdot x}. \]
Fourier analysis is important in the study of IQP because the model can be understood as sampling from the Fourier spectrum of a function $f(x) = \braket{x|D|x}$; indeed, the probability of receiving outcome $s$ when measuring at the end of the circuit is precisely $|\hat{f}(s)|^2$ when noise is absent.

Fourier analysis is also useful to understand the effect of noise. Recall from the introduction that the depolarising noise applied at the end of the circuit is equivalent to applying noise to the output probability distribution $p$ to give a new distribution $\widetilde{p}$. The noise operation applied is precisely the binary symmetric channel, also known simply as the ``noise operator'' for functions on the boolean cube. We denote this classical noise operation $\mathcal{N}_\epsilon$. The Fourier coefficients of the resulting distribution behave nicely under this noise~\cite{odonnell14}:
\[ \widehat{\widetilde{p}}(s) = (1-\epsilon)^{|s|}\widehat{p}(s) \]
for all $s \in \{0,1\}^n$, where $|s|$ is the Hamming weight of $s$.

\subsection{The IQP simulation algorithm}

We first show how to determine a function $\widetilde{q}$ approximating the noisy output distribution $\widetilde{p}$ up to $\ell_1$ error $\delta$, for arbitrary $\delta > 0$. Imagine we know approximations $\widehat{p}'(s)$ to the Fourier coefficients of $p$ for $|s| \le \ell$, for some integer $\ell$, such that $|\widehat{p}'(s) - \widehat{p}(s)| \le \gamma 2^{-n}$ for some $\gamma$. Then our approximation is defined by $\widehat{\widetilde{q}}(s) = (1-\epsilon)^{|s|} \widehat{p}'(s)$ for $|s| \le \ell$, and $\widehat{\widetilde{q}}(s) = 0$ for $|s| > \ell$. So, bounding the $\ell_1$ norm by the $\ell_2$ norm and using Parseval's equality, we have
\beas
\| \widetilde{q} - \widetilde{p} \|_1^2 &\le& 2^n \sum_{x \in \{0,1\}^n} (\widetilde{q}_x - \widetilde{p}_x)^2\\
&=& 2^{2n} \sum_{s \in \{0,1\}^n} (\widehat{\widetilde{q}}(s) - \widehat{\widetilde{p}}(s))^2\\
&=& 2^{2n} \left( \sum_{s,|s| \le \ell} (1-\epsilon)^{2|s|}(\widehat{p}'(s) - \widehat{p}(s))^2 + \sum_{s,|s| > \ell} (1-\epsilon)^{2|s|}\widehat{p}(s)^2 \right)
\eeas
and hence
\beas
\| \widetilde{q} - \widetilde{p} \|_1^2 &\le& \gamma^2(n^\ell+1) + 2^{2n} (1-\epsilon)^{2\ell} \sum_{s \in \{0,1\}^n} \widehat{p}(s)^2\\
&=& \gamma^2(n^\ell+1) + 2^n (1-\epsilon)^{2\ell} \sum_{x \in \{0,1\}^n} p_x^2,
\eeas
where we use $|\{s: |s| \le \ell \}| = \sum_{k=0}^\ell \binom{n}{k} \le n^\ell + 1$. Now assume that $\sum_{x \in \{0,1\}^n} p_x^2 \le \alpha 2^{-n}$ for some $\alpha$. For random IQP circuits, for example, we have $\alpha=O(1)$ with high probability~\cite{bremner16}. Then we have
\[ \| \widetilde{q} - \widetilde{p} \|_1 \le \sqrt{\gamma^2(n^\ell + 1) + \alpha (1-\epsilon)^{2\ell}} \le \sqrt{\gamma^2(n^\ell + 1) + \alpha e^{-2\epsilon \ell}}, \]
so in order to approximate $\widetilde{p}$ up to accuracy $\delta$ in $\ell_1$ norm, it is sufficient to take $\ell = O(\log (\alpha/\delta)/\epsilon)$, $\gamma = O(\delta n^{-\ell/2})$. This corresponds to approximating $n^{O(\log(\alpha/\delta)/\epsilon)}$ Fourier coefficients of $p$ up to accuracy $O(\delta n^{-O(\log(\alpha/\delta)/\epsilon)} 2^{-n})$.

To see that we can approximate these coefficients efficiently, observe that there is a nice expression for them when $p$ is the output probability distribution of an IQP circuit defined by a diagonal matrix $D$, where $\bracket{x}{D}{x} = f(x)$ for some $f:\{0,1\}^n \rightarrow \C$:
%
\beas
\label{eq:conv} \widehat{p}(s) &=& \frac{1}{2^n} \sum_{x \in \{0,1\}^n} p(x) (-1)^{s \cdot x}\\
&=& \frac{1}{2^n} \sum_{x \in \{0,1\}^n} \left| \frac{1}{2^n} \sum_{y \in \{0,1\}^n} f(y) (-1)^{x \cdot y} \right|^2 (-1)^{s \cdot x}\\
&=& \frac{1}{2^{3n}} \sum_{x,y,z \in \{0,1\}^n}  \overline{f(y)}f(z) (-1)^{x \cdot (s+y+z)}\\
&=& \frac{1}{2^{2n}} \sum_{y \in \{0,1\}^n} \overline{f(y)} f(y+s),
\eeas
where $\bar{\cdot}$ denotes complex conjugation\footnote{This can also be seen immediately by observing that the Fourier transform changes multiplication into convolution.}. For any $\eta > 0$, it follows from standard Chernoff bound arguments~\cite{dubhashi09} that we can approximate $2^{-n} \sum_{x \in \{0,1\}^n} \overline{f(x)} f(x + s) = 2^n \widehat{p}(s)$ up to additive error $\eta$ with failure probability $1/3$ using $O(1/\eta^2)$ evaluations of $f$, by simply picking $O(1/\eta^2)$ random values $x \in \{0,1\}^n$, computing $\overline{f(x)} f(x + s)$ and taking the average. Taking the median of $O(\log 1/\zeta)$ repetitions of this procedure reduces the probability of failure to $\zeta$, for any $\zeta >0$. Thus we can approximate $\widehat{p}(s)$ up to additive error $\delta n^{-O(\log(\alpha/\delta)/\epsilon)} 2^{-n}$ with failure probability $1/\poly(n)$ by evaluating $f$ $n^{O(\log(\alpha/\delta)/\epsilon)}/\delta^2$ times. Each such evaluation can be performed in polynomial time. So all of the required coefficients can be approximated up to additive error $\delta 2^{-n}$ in time $n^{O(\log(\alpha/\delta)/\epsilon)}$, with failure probability $o(1)$.

Next, we show that, for any $\widetilde{q}$, knowledge of the Fourier coefficients of $\widetilde{q}$ implies that we can compute its marginals efficiently (see~\cite{shepherd10} for a related discussion). Note that $\widetilde{q}$ is not necessarily a probability distribution: i.e.\ it may take negative values and not sum to 1. Let $x_{1\dots k}$ denote the string consisting of the first $k$ bits of $x$. Assume that $\widetilde{q}$ has $N$ nonzero Fourier coefficients and consider the sum $S_y := \sum_{x,x_{1\dots k} = y} \widetilde{q}(x)$ for each $k \in \{0,\dots,n\}$ and each $y \in \{0,1\}^k$, where for $k=0$ we consider the empty string $y = \emptyset$ and let $S_\emptyset = \sum_x \widetilde{q}(x)$. Then 
\beas
S_y &=& \sum_{x,x_{1\dots k} = y} \widetilde{q}(x)\\
&=& \sum_{x,x_{1\dots k} = y} \sum_{s \in \{0,1\}^n} (-1)^{x \cdot s} \widehat{\widetilde{q}}(s)\\
&=& \sum_{s \in \{0,1\}^n} \widehat{\widetilde{q}}(s) \sum_{x,x_{1\dots k} = y} (-1)^{x \cdot s}\\
\eeas
\beas
S_y &=& \sum_{s \in \{0,1\}^n} \widehat{\widetilde{q}}(s) (-1)^{y \cdot s_1\dots s_k} \sum_{x \in \{0,1\}^{n-k}} (-1)^{x \cdot s_{k+1,\dots,n}} \\
&=& 2^{n-k} \sum_{s,s_{k+1,\dots,n}=0^{n-k}} \widehat{\widetilde{q}}(s) (-1)^{y \cdot s_1\dots s_k}.
\eeas
Although in general the sum could contain up to $2^n$ terms, we only need to include those terms where $\widehat{\widetilde{q}}(s) \neq 0$. For each $y$, $S_y$ can therefore be computed exactly in $N \poly(n) = n^{O(\log(\alpha/\delta)/\epsilon)}$ time. It remains to show part 3 of the plan sketched in the introduction to this section: that knowledge of the marginals of $\widetilde{q}$ allows us to sample from a distribution approximating $p$.


\subsection{Sampling from an approximate probability distribution}

We now show that, in a quite general setting, if we can compute the marginals of an approximation $p'$ to a probability distribution $p$, we can approximately sample from $p$. Note that this task is apparently rather similar to one considered by Schwarz and Van den Nest~\cite{schwarz13}, who showed that certain quantum circuit families -- such as IQP circuits -- with sparse output distributions can be simulated classically, by using the Kushilevitz-Mansour algorithm~\cite{kushilevitz91} to approximately learn the corresponding Fourier coefficients, then showing that a probability distribution close to the corresponding approximate probability distribution can be sampled from exactly. However, in~\cite{schwarz13} it was sufficient to show that, given a probability distribution $p$ with at most $\poly(n)$ nonzero probabilities, each determined up to additive error $O(1/\poly(n))$, we can approximately sample from $p$. Here we have something harder to work with: that the distribution we have approximates $p$ up to constant overall $\ell_1$ error.

Fix integer $n \ge 0$. Imagine we have access to marginals of some $p' \in \R^{2^n}$ such that $\| p' - p\|_1 \le \delta$ (for $n=0$, $p'$ is just a real number), and that $\sum_x p'_x > 0$. Here ``access'' means that we can exactly compute sums of the form $S_y := \sum_{x,x_{1\dots k} = y} p'_x$ for each $k \in \{0,\dots,n\}$ and each $y \in \{0,1\}^k$, where for $k=0$ we consider the empty string $y$ and define $S = \sum_x p'_x$. We would like to sample from a probability distribution approximating $p$. Note that $p'$ may not be a probability distribution itself.

We use the following procedure, which is a ``truncated'' version of a standard procedure for sampling from a probability distribution, given access to its marginals:
\begin{enumerate}
\item Set $y$ to the empty string.
\item For $i=1,\dots,n$:
\begin{enumerate}
\item If $S_{yz} < 0$ for some $z \in \{0,1\}$, set $y \leftarrow y\bar{z}$, where $\bar{z} = 1-z$.
\item Otherwise: with probability $S_{y0} / S_y$, set $y \leftarrow y0$; otherwise, set $y \leftarrow y1$.
\end{enumerate}
\item Return $y$.
\end{enumerate}
We observe that, at each step of the procedure, there can be at most one $z \in \{0,1\}$ such that $S_{yz} < 0$. Otherwise, we would have $S_y < 0$, and hence $y$ would not have been picked at the previous step. Therefore this procedure defines a probability distribution Alg$(p')$ on $n$-bit strings, for any $p'$ such that $S > 0$. Crucially, we can show that Alg$(p') \approx p$:

\begin{lem}
\label{lem:approx}
Let $p$ be a probability distribution on $\{0,1\}^n$. Assume that $p':\{0,1\}^n \rightarrow \R$ satisfies $\| p' - p\|_1 \le \delta$ for some $\delta < 1$. Then $\|\operatorname{Alg}(p') - p \|_1 \le 4\delta/(1-\delta)$.
\end{lem}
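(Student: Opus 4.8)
The plan is to prove the bound by induction on $n$, peeling off the first bit of the string at each step. A direct induction on the normalised conditional distributions of $p'$ turns out to be too lossy: conditioning on the first bit forces a division by the branch mass $S_z := \sum_{x:x_1=z} p'_x$, which inflates the $\ell_1$ error by a factor $1/S_z$ that is never recovered. The key device is instead to track the \emph{unnormalised} measure $A(\nu) := \bigl(\sum_x \nu_x\bigr)\operatorname{Alg}(\nu)$, a non-negative measure of the same total mass as $\nu$, defined whenever $\sum_x \nu_x > 0$. (Here I use that $\operatorname{Alg}$ is scale-invariant, since the ratios $S_{y0}/S_y$ and the signs of the $S_{yz}$ are unchanged under multiplying $\nu$ by a positive constant.) I would then prove the following clean statement by induction: for every signed measure $\nu$ on $\{0,1\}^m$ with $\sum_x \nu_x > 0$ and every \emph{non-negative} measure $r$ on $\{0,1\}^m$,
\[ \|A(\nu) - r\|_1 \le 2\,\|\nu - r\|_1. \]

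The base case $m = 0$ is immediate, since then $A(\nu) = \nu$. For the inductive step I split on the signs of the first-bit marginals $S_0, S_1$ of $\nu$. In the non-clipped case $S_0, S_1 \ge 0$, the payoff of working with $A$ appears: the unnormalised measure factorises \emph{exactly}, $A(\nu)_{zw} = A(\nu_{z\cdot})_w$, with no first-bit mismatch term, so applying the induction hypothesis in each branch and summing gives $\|A(\nu) - r\|_1 = \sum_z \|A(\nu_{z\cdot}) - r_{z\cdot}\|_1 \le 2\|\nu - r\|_1$ at once.

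The remaining, clipped case (say $S_1 < 0$, which forces the algorithm entirely into branch $0$) is where I expect the main difficulty. Here $A(\nu)_{1\cdot} = 0$ and $A(\nu)_{0\cdot} = (S/S_0)\,A(\nu_{0\cdot})$ with $S = S_0 + S_1$, and a short estimate produces, besides the recursive term $2\|\nu_{0\cdot} - r_{0\cdot}\|_1$, a leftover error of $|S_1| + R_1$, where $R_1 := \sum_w r_{1w}$. The crux is to absorb this leftover into the branch-$1$ contribution of $\|\nu - r\|_1$; for this I would use non-negativity of $r$, which gives $\|\nu_{1\cdot} - r_{1\cdot}\|_1 \ge |S_1 - R_1| = |S_1| + R_1$ precisely because $S_1 < 0 \le R_1$. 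Thus the leftover is at most $\|\nu_{1\cdot} - r_{1\cdot}\|_1$, and the induction closes with the constant $2$.

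Finally I would deduce the lemma by applying this statement with $\nu = p'$ and $r = S\,p$, where $S := \sum_x p'_x$; note $S \ge 1 - \delta > 0$ since $\|p' - p\|_1 \le \delta$, and $r$ is non-negative. Since $A(p') = S\operatorname{Alg}(p')$, the bound reads $S\,\|\operatorname{Alg}(p') - p\|_1 \le 2\|p' - Sp\|_1$, and $\|p' - Sp\|_1 \le \|p' - p\|_1 + |1 - S| \le 2\delta$ by the triangle inequality, giving $\|\operatorname{Alg}(p') - p\|_1 \le 4\delta/S \le 4\delta/(1-\delta)$. A final check is needed for the degenerate subcases where a branch mass $S_z$ vanishes, so that the induction hypothesis (which requires positive total mass) is only invoked on the surviving branch; the same bookkeeping, using $\|\nu_{z\cdot} - r_{z\cdot}\|_1 \ge |S_z - R_z|$, handles these.
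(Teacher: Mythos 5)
Your proof is correct, and it takes a genuinely different route from the paper's, although both hinge on the same unnormalised object. The paper defines a recursive map $\operatorname{Fix}$ and proves $\operatorname{Fix}(\nu) = (\sum_x \nu_x)\operatorname{Alg}(\nu)$, which is exactly your $A$; but its key lemma is the \emph{exact identity} $\|\operatorname{Fix}(p') - p'\|_1 = 2\sum_{x:\,p'_x<0}|p'_x|$, proved via structural claims ($\operatorname{Fix}$ zeroes negative entries, shrinks non-negative ones by factors at most $1$, and preserves total mass). It then bounds the negative mass of $p'$ by $\delta$ (using non-negativity of $p$) and finishes with the three-term triangle inequality $\|\tfrac1S\operatorname{Fix}(p') - p\|_1 \le \tfrac1S\|\operatorname{Fix}(p')-p'\|_1 + \tfrac1S\|p'-p\|_1 + \|\tfrac1S p - p\|_1$, arriving at the same constant $4\delta/(1-\delta)$. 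You instead prove a relative stability bound, $\|A(\nu)-r\|_1 \le 2\|\nu-r\|_1$ for every non-negative reference measure $r$, by induction with explicit treatment of the clipped branch, and then specialise to $r = Sp$. The paper's route buys an exact characterisation of the distortion introduced by truncation — the error is precisely twice the negative mass of $p'$, which can be much smaller than $\delta$ — whereas your route buys a self-contained Lipschitz-type property of the sampling map, uniform over non-negative targets; and the generalisation over $r$ is precisely what makes your induction close, since the clipped leftover $|S_1|+R_1$ is absorbed into $\|\nu_{1\cdot}-r_{1\cdot}\|_1$ only because $r\ge 0$ forces $|S_1 - R_1| = |S_1|+R_1$. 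Both arguments ultimately rest on the same two facts: the unnormalised measure factorises exactly over branches (so no error accumulates in the unclipped case), and non-negativity of the target controls the mass destroyed by clipping. Your degenerate-branch bookkeeping ($S_z = 0$) is also handled correctly and is needed, since the induction hypothesis requires strictly positive mass.
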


We defer the proof of Lemma \ref{lem:approx} to Appendix \ref{app:sampling}.

All that remains to prove Theorem \ref{thm:main} is to put all the pieces together. The overall algorithm starts by approximating and storing enough Fourier coefficients of $\widetilde{q}$ required to ensure that $\|\operatorname{Alg}(\widetilde{q}) - p \|_1 \le \delta$. From Lemma \ref{lem:approx} and the discussion in previous sections, this can be achieved in time $n^{O(\log(\alpha/\delta)/\epsilon)}$. Then each sample from $\operatorname{Alg}(\widetilde{q})$ can be produced in time $\poly(n)$. This completes the proof.


\section{Extensions}

\subsection{Other algorithms}

There is not that much about the classical simulation approach proposed here which is specific to IQP circuits. Indeed, it will work for any class of circuits for which the output distribution is sufficiently anticoncentrated, and for which we can classically compute the Fourier coefficients of the output distribution.

{\bf Simon's algorithm.} Simon's quantum algorithm solves a certain oracular problem using exponentially fewer queries to the oracle than any possible classical algorithm~\cite{simon97}. In Simon's problem we are given access to a function $f:\{0,1\}^n \rightarrow Y$ for some set $Y$, and are promised  that there exists $t \in \{0,1\}^n$ such that $f(x) = f(y)$ if and only if $x + y = t$, where addition is bitwise modulo 2. Our task is to determine $t$. Simon's algorithm solves this problem using $O(n)$ evaluations of $f$, whereas any classical algorithm requires $\Omega(2^{n/2})$ evaluations. The output probability distribution of the algorithm is uniformly random over bit-strings $x \in \{0,1\}^n$ such that $x \cdot t = 0$. This distribution is sufficiently anticoncentrated for the above algorithm to work ($\alpha = 2$), and the Fourier coefficients of the output probability distribution $p$ can easily be calculated; $\hat{p}(0^n) = 2^{-n}$, and for $s \neq 0^n$,
\[ \hat{p}(s) = \frac{1}{2^{2n-1}} \sum_{x,x \cdot t=0} (-1)^{x \cdot s} = \frac{1}{2^{2n}} \sum_x (1+(-1)^{x \cdot t}) (-1)^{x \cdot s} = \frac{1}{2^{2n}} \sum_x (-1)^{x \cdot(s+t)} = \frac{\delta_{st}}{2^n}. \]
So we can evaluate $\hat{p}(s)$ by determining whether $s=t$, which can be done efficiently (for a given $s$).

{\bf Other algorithms?} Assume that we have the ability to exactly compute arbitrary probabilities $p_x$ in $\poly(n)$ time (note that this does not necessarily give us the ability to sample from $p$). For the above approach to work, we would like to approximate $2^n \hat{p}(s) = \sum_x p_x (-1)^{s \cdot x}$ up to additive accuracy $\delta$. In general, we will not be able to do this efficiently; for example, imagine $p_x = 1$ for some unique $x$, and all other probabilities are 0. Then $\hat{p}(s)$ only depends on one $x$, which we do not know in advance. A similar argument still holds for relatively anticoncentrated distributions. On the other hand, by a similar argument to that used to approximate $\hat{p}(s)$ for IQP circuits, we can achieve a suitable level of approximation whenever we are able to exactly compute the Fourier coefficients of the output state $\ket{\psi}$. Indeed, it is even sufficient to approximate $\bracket{s}{H^{\otimes n}}{\psi}$ up to very high accuracy.

One particular case which is tempting to address is the ``quantum approximate optimization algorithm'' (QAOA) invented by Farhi, Goldstone and Gutmann~\cite{farhi14,farhi14a}. This algorithm has been proposed to offer a route towards proving quantum supremacy~\cite{farhi16}. In the simplest version of the algorithm, the first step is to produce the state $\ket{\psi} = e^{-iB} e^{-iC} \ket{+}^{\otimes n}$, where $B = \beta \sum_i X_i$, $C = \gamma \sum_i C_i$ for some coefficients $\beta$, $\gamma$, where $X_i$ is Pauli-X on the $i$'th qubit, and each matrix $C_i$ is diagonal and only acts nontrivially on $O(1)$ qubits. The second step is to measure $\ket{\psi}$ in the computational basis to sample from a hopefully interesting distribution. The structure of the QAOA is very similar to an IQP circuit, and hardness of simulating the algorithm classically, up to small relative error, can be proven under similar assumptions to those for IQP circuits~\cite{farhi16}. We can think of $e^{-i \beta B} = \sm{\cos \beta & -i \sin \beta \\ -i \sin \beta & \cos \beta}$ as a kind of variant $H$ gate. In this case we can approximate $\bracket{s}{H^{\otimes n}}{\psi}$, but not to a sufficiently high level of accuracy for the above approach to work.


\subsection{Reducing the anticoncentration requirement}

One apparently non-ideal aspect of our results on simulating IQP circuits is the dependence on $\alpha$, meaning that we only obtain a polynomial-time classical simulation when the output probability distribution of the circuit is rather spread out. Interestingly, it was shown by Schwarz and Van den Nest~\cite{schwarz13} that IQP circuits can be simulated classically (with a similar notion of simulation to that considered here) if the output probability distribution $p$ is close to sparse. That is, if there exists a distribution $p'$ such that $\|p - p'\|_1 \le \delta$ for some small fixed $\delta$, and such that $p'$ only contains $t = \poly(n)$ nonzero probabilities (``$p$ is $\epsilon$-close to $t$-sparse''). This seems close to being a converse to the condition considered here, that $\sum_x p_x^2 \le \alpha 2^{-n}$ for $\alpha = O(1)$. If this were the case, we would have shown that noisy IQP circuits can always be simulated (if we can simulate a noiseless IQP circuit, we can simulate a noisy one, by sampling from the output distribution and then applying noise to the sample). From our results on fault-tolerance below, we would not expect this to be possible.

However, the constraint used here is not precisely the converse of that in~\cite{schwarz13}. Consider an IQP circuit $\mathcal{C}$ whose diagonal part consists of CZ gates on qubits $(1,2), (3,4),\dots,(k-1,k)$. Then $p$ is uniformly distributed over the set of bit-strings $x$ such that $x_i = 0$ for $i \in \{k+1,\dots,n\}$. So $p$ is $2^k$-sparse, but far from $L$-sparse for any $L \le 2^{k-1}$, for example. Further, $\sum_x p_x^2 = 2^{-k}$. If we take $k = n/2$, neither the present simulation method nor the method of~\cite{schwarz13} gives an efficient algorithm.



\section{Fault-tolerance}
\label{sec:ft}

We now show that the type of depolarising noise considered in this work can be dealt with using purely classical ideas from the theory of error-correcting codes. That is, we show that for any IQP circuit $\mathcal{C}$ with output probability distribution $p$, we can write down a corresponding IQP circuit $\mathcal{C}'$ such that, if depolarising noise is applied to every qubit of the output of $\mathcal{C}'$, we can still sample from a distribution which is close to $p$ up to arbitrarily small $\ell_1$ distance.

Let $M$ be an $n \times m$ matrix over $\F_2$, $m \ge n$, such that the rows of $M$ are linearly independent. For any function $f:\{0,1\}^n \rightarrow \C$,  define the function $f_M:\{0,1\}^m \rightarrow \C$ by $f_M(x) = f(Mx)$, where $Mx$ denotes matrix multiplication over $\F_2$. Then it is easy to see that, for any $s \in \{0,1\}^m$ such that $s = M^T t$ for some $t \in \{0,1\}^n$, $\widehat{f_M}(s) = \hat{f}(t)$; and otherwise, $\widehat{f_M}(s) = 0$. Indeed, if we define a function $g:\{0,1\}^m \rightarrow \C$ by
\[ \hat{g}(s) = \begin{cases} \hat{f}(t) & \text{if $s = M^T t$ for some $t \in \{0,1\}^n$}\\ 0 & \text{otherwise} \end{cases}, \]
then
\[ g(x) = \sum_s \hat{g}(s) (-1)^{s \cdot x} = \sum_t \hat{f}(t) (-1)^{(M^T t) \cdot x} = \sum_t \hat{f}(t) (-1)^{t \cdot (Mx)} = f(Mx), \]
where we use linear independence of the rows of $M$ in the second equality. So $g = f_M$, and equivalently $\widehat{f_M} = \hat{g}$.

This implies that a linear transformation $s \mapsto M^T s$ of the output probability distribution of a unitary operation $\mathcal{C} = H^{\otimes n} D H^{\otimes n}$ can be achieved by applying a corresponding linear transformation to the diagonal part of the circuit. If $\mathcal{C}$ is an IQP circuit where $D$ is made up of $\poly(n)$ diagonal gates, this transformation can be performed efficiently, i.e.\ in time $\poly(n)$.
%
%
Indeed, the diagonal part of any IQP circuit can be written as
\be \label{eq:xprogram} D = e^{i \sum_{j=1}^\ell \theta_j \prod_{k=1}^n Z_j^{C_{jk}}} \ee
for some real coefficients $\theta_j$, where $C$ is an $\ell \times n$ matrix over $\F_2$, and $Z_j$ denotes a Pauli-Z operation acting on the $j$'th qubit. This formalism was introduced in~\cite{shepherd09} and is known as an ``X-program''\footnote{The ``X'' comes from the conjugation by Hadamards replacing the Pauli-Z gates with Pauli-X gates.}. If we replace $C$ with $CM$, we obtain a new circuit whose diagonal part is
\[ D_M = e^{i \sum_{j=1}^\ell \theta_j \prod_{k=1}^m Z_j^{(CM)_{jk}}}. \]
Then, using the fact that
\[ \bracket{x}{\prod_{k=1}^m Z_j^{(CM)_{jk}}}{x} =  (-1)^{\sum_{k=1}^m (CM)_{jk} x_k} =  (-1)^{\sum_{k=1}^n C_{jk} (Mx)_k} = \bracket{Mx}{\prod_{k=1}^n Z_j^{C_{jk}}}{Mx}, \]
we have $\bracket{x}{D_M}{x} = \bracket{Mx}{D}{Mx}$. Very similar, and more general, ideas about transformation of IQP circuits were introduced in~\cite{shepherd09,shepherd10}, albeit in slightly different language.

Applying linear transformations over $\F_2$ to the output distribution of $\mathcal{C}$ allows us to use techniques from the classical theory of error-correcting codes to combat noise. Let $M$ be the generator matrix of an error-correcting code of length $m$ and dimension $n$. The noise operation $\mathcal{N}_\epsilon$ we consider corresponds to sampling a bit-string $t = M^Ts$, then flipping each bit of $t$ with independent probability $\epsilon/2$ to produce a new bit-string $\widetilde{t}$; we write $\widetilde{t} \sim_\epsilon t$ for the distribution on noisy bit-strings $\widetilde{t}$. So if $M$ has an efficient decoding algorithm, given a noisy sample $\widetilde{t} \in \{0,1\}^m$, we can apply the decoding algorithm to produce a new sample $s'$. If $M$ has good error-correction properties, $s' = s$ with high probability.

More formally, we consider the distribution $p'$ on bit-strings $s' \in \{0,1\}^n$ produced by the following procedure:
\begin{enumerate}
\item Produce a sample $s \in \{0,1\}^n$ from a distribution $p$ where $p_s = |\hat{f}(s)|^2$.
\item Set $t = M^T s$.
\item Flip each bit of $t$ with independent probability $\epsilon/2$ to produce $\widetilde{t}$.
\item Apply the decoding algorithm $D$ for $M$ to produce $s' = D(\widetilde{t})$.
\end{enumerate}
Then we would like to show that $\| p - p' \|_1 \le \delta$ for arbitrarily small $\delta$. If it holds that, for all encoded bit-strings $x$, $\Pr_{y \sim_\epsilon x}[D(y)\neq D(x)]\le\delta/2$, then $\|(p')^{(s)} - p^{(s)} \|_1 \le \delta$ for all original bit-strings $s$, where $p^{(s)}$ is the point distribution on bit-strings $s \in \{0,1\}^n$ such that $p^{(s)}(s) = 1$, and $(p')^{(s)}$ is the corresponding distribution on decoded noisy bit-strings. Taking the average over bit-strings $s$ according to $p$, we have $\| p - p' \|_1 \le \delta$ by convexity.

There are classical error-correcting codes which achieve $\Pr_{y \sim_\epsilon x}[D(y)\neq D(x)] \rightarrow 0$ for any $\epsilon < 1$ with only a modest overhead. Indeed, Shannon's noisy channel coding theorem for the binary symmetric channel states that an exponentially small (in $n$) failure probability can be achieved (nonconstructively) for any $\epsilon < 1$ by taking $m = c_\epsilon n$ for some $c_\epsilon$ that depends only on $\epsilon$. Explicit codes which almost achieve Shannon's nonconstructive bound, and have efficient decoding algorithms, are known; for example, low-density parity check codes and certain concatenated codes~\cite{richardson08}. So the overhead need only be a constant factor.

For our purposes, it would even be sufficient to use a simple repetition code, where each bit is encoded in $r$ bits. Here $M$ is just $r$ copies of the identity matrix (for $r$ odd) and the decoding algorithm takes a majority vote. The probability that a bit is decoded incorrectly is the same as the probability that more than $r/2$ bits are flipped, which is
\[ \sum_{i > r/2} \binom{r}{i} (\epsilon/2)^i (1-\epsilon/2)^{r-i} \le 2^r (\epsilon/2)^{r/2} (1-\epsilon/2)^{r/2} = \left( \epsilon(2-\epsilon) \right)^{r/2}, \]
so for any $\epsilon < 1$ the probability that an individual bit is decoded incorrectly is exponentially small in $r$. So, using a repetition code, it is sufficient to take $r = O(\log n)$ for all $n$ bits to be decoded successfully except with low probability.

The IQP circuits produced using error-correcting codes could be more complex than the original circuits, as they may involve gates acting on up to $n$ qubits. However, in some cases these gates can then be replaced with gates acting on only $O(1)$ qubits each. For example, consider the family of circuits where $\theta_j$ is restricted to be a multiple of $\pi/8$ in (\ref{eq:xprogram}), which is one of the cases shown hard to simulate in~\cite{bremner16} (assuming some complexity-theoretic conjectures). It is shown in~\cite{shepherd10} that any gate in such a circuit, even acting on all $n$ qubits, can be replaced with $\poly(n)$ gates from the same family acting on at most 3 qubits each without changing the output of the circuit.



\subsection*{Acknowledgements}

AM was supported by an EPSRC Early Career Fellowship (EP/L021005/1). MJB has received financial support from the Australian Research Council via the Future Fellowship scheme (grant FT110101044) and acknowledges support as a member of the ARC Centre of Excellence for Quantum Computation and Communication Technology (CQC2T), project number CE170100012. We would like to thank Richard Jozsa, Sergio Boixo, Eleanor Rieffel, Ryan Mann and Juan Bermejo-Vega for helpful comments. No new data were created during this study.


\appendix

\section{Anticoncentration bound}
\label{app:anticonc}

\begin{replem}{lem:anticonc}
Let $\mathcal{C}$ be a random sparse IQP circuit. Then $\E_{\mathcal{C}}[|\bracket{0}{\mathcal{C}}{0}|^2] = 2^{-n}$ and, for a large enough constant $\gamma$, $\E_{\mathcal{C}}[|\bracket{0}{\mathcal{C}}{0}|^4] \le 5 \cdot 2^{-2n}$.
\end{replem}

\begin{proof}
It is easy to see from symmetry arguments~\cite{bremner16} that $\E_{\mathcal{C}}[|\bracket{0}{\mathcal{C}}{0}|^2] = 2^{-n}$. So all that remains is to get a bound on $\E_{\mathcal{C}}[|\bracket{0}{\mathcal{C}}{0}|^4]$.

Let $\alpha_{ij} \in \{0,\dots,3\}$ be the number of times the gate $(1,1,1,\omega)$ is applied across qubits $i$ and $j$. It is shown in the last appendix of~\cite{bremner16} that, for any distribution on the $\alpha_{ij}$ coefficients,
\[ |\bracket{0}{\mathcal{C}}{0}|^4 = 2^{-4n} \sum_{w,x,y \in \{0,1\}^n} \prod_{i < j} \E_{\alpha_{ij}} \left[\omega^{\alpha_{ij}(w_i(y_j-x_j)+x_i(y_j-w_j)+y_i(w_j+x_j)-2y_iy_j)} \right]. \]
For any coefficients $\beta_{ij}$, we have
\[ \E_{\alpha_{ij}}[\omega^{\alpha_{ij} \beta_{ij}}] = (1-p)1 + p\,\E_{\alpha_{ij} \sim \mathcal{U}}[\omega^{\alpha_{ij}\beta_{ij}}] = \begin{cases} 1 & \text{if $\beta_{ij} \equiv 0$ mod 4}\\ 1-p & \text{otherwise,} \end{cases} \]
where $\mathcal{U}$ is the uniform distribution on $\{0,\dots,3\}$, recalling that $p$ is the probability that a gate is applied across qubits $i$ and $j$. As $w,x,y \in \{0,1\}^n$, the expression $F_{ij}(w,x,y) := w_i(y_j-x_j)+x_i(y_j-w_j)+y_i(w_j+x_j)-2y_iy_j$ is zero mod 4 if and only if it equals zero. We therefore have
\[ |\bracket{0}{\mathcal{C}}{0}|^4 = 2^{-4n} \sum_{w,x,y\in\{0,1\}^n} (1-p)^{|\{(i<j): F_{ij}(w,x,y) \neq 0 \}|}. \]
It can be checked that, for any $k \in \{1,\dots,n\}$, $F_{ij}(w,x,y) \neq 0$ if and only if $F_{ij}(w^k,x^k,y^k) \neq 0$, where $w^k$ is the bit-string produced from $w$ by flipping the $k$'th bit. By flipping bits of $w$, we can therefore assume that $w=0^n$ and obtain
\[ |\bracket{0}{\mathcal{C}}{0}|^4 = 2^{-3n} \sum_{x,y\in\{0,1\}^n} (1-p)^{|\{(i<j): F_{ij}(0^n,x,y) \neq 0 \}|}. \]
For a given pair $(i<j)$, $F_{ij}(0^n,x,y) = x_i y_j + y_i x_j - 2 y_i y_j \neq 0$ if and only if the pairs of strings $(x_i x_j, y_i y_j)$ are in the following set:
\[ \{(00,11),(01,10),(01,11),(10,01),(10,11),(11,01),(11,10) \}. \]
Put another way, the strings $(x_i y_i, x_j y_j)$ should be in the following set:
\[ \{ (01,01), (01,10), (01,11), (10,01), (11,01), (10,11), (11,10) \} \]
Define integers $a$, $b$, $c$ by
\[ a = |\{i:x_i=y_i=0\}|,\;\; b = |\{i:x_i=0,y_i=1\}|,\;\; c = |\{i:x_i=1,y_i=0\}|,\;\; d = |\{i:x_i=y_i=1\}|. \]
Then
\[ |\{(i<j): F_{ij}(0^n,x,y) \neq 0 \}| = \binom{b}{2} + bc + bd + cd. \]
So
\[ |\bracket{0}{\mathcal{C}}{0}|^4 \le 2^{-3n} \sum_{b,c,d=0}^n N_{bcd} (1-p)^{bc+bd+cd}, \]
where $N_{bcd} = \binom{n}{b} \binom{n-b}{c} \binom{n-b-c}{d}$ is the number of pairs $(x,y)$ with the correct numbers of combinations of bits (so $|\{i:x_i=0,y_i=1\}| = b$, etc.), and we have simplified by removing the $\binom{b}{2}$ term, which can only make the inequality looser. 
We now split into cases. Let $\alpha$ be a constant such that $\binom{n}{\alpha n} \le 2^{n/3} / (n+1)$ (for example, $\alpha = 1/20$ works for large enough $n$). Then, as $N_{bcd} \le \binom{n}{b} \binom{n}{c} \binom{n}{d}$, all terms in the sum such that $\max\{b,c,d\} \le \alpha n$ are bounded by $2^n / (n+1)^3$. Now consider a term in the sum such that at least one of $b,c,d$ is larger than $\alpha n$ (assume $b$ wlog). Then
\[ N_{bcd} = \binom{n}{b} \binom{n-b}{c} \binom{n-b-c}{d} \le 2^n n^{c+d}, \]
so
\[ N_{bcd} (1-p)^{bc+bd+cd} \le 2^n n^{c+d} (1-p)^{b(c+d)} \le 2^n (n(1-p)^{\alpha n})^{c+d} \le 2^n (n e^{-\alpha \gamma \ln n})^{c+d}. \]
Taking $\gamma = 4/\alpha$, this is upper-bounded by $2^n n^{-3}$ whenever $c \ge 1$ or $d \ge 1$. It remains to consider the cases where $c=0,d=0$: the sum resulting from these is bounded by $2^{-3n} \sum_{b=0}^n \binom{n}{b} = 2^{-2n}$. Multiplying by 3, to allow for the choice of each of $b,c,d$ as the large value, the entire sum, and hence $|\bracket{0}{\mathcal{C}}{0}|^4$, is bounded by $5 \cdot 2^{-2n}$. This completes the proof.
\end{proof}


\section{Sampling from an approximate distribution}
\label{app:sampling}

In this appendix we prove Lemma \ref{lem:approx} regarding the behaviour of the algorithm Alg applied to approximate probability distributions $p'$. To do so, we define a closely related, but somewhat easier to analyse, procedure Fix$(p)$ as follows for vectors $p \in \R^{2^n}$, integer $n \ge 0$, such that $\sum_x p_x > 0$. First, for $n=0$, Fix$(p)=p$. For $n \ge 1$, writing $p = \sm{a\\b}$ for some $a,b \in \R^{2^{n-1}}$,
\[
\operatorname{Fix}(p) =
\begin{cases}
\begin{pmatrix} \operatorname{Fix}(a) \\ \operatorname{Fix}(b) \end{pmatrix} & \text{ if $\sum_x a_x > 0$ and $\sum_x b_x > 0$}\\
\frac{\sum_x p_x}{\sum_x a_x} \begin{pmatrix} \operatorname{Fix}(a) \\ 0 \end{pmatrix} & \text{ if $\sum_x a_x > 0$ and $\sum_x b_x \le 0$}\\
\frac{\sum_x p_x}{\sum_x b_x} \begin{pmatrix} 0 \\ \operatorname{Fix}(b) \end{pmatrix} & \text{ if $\sum_x a_x \le 0$ and $\sum_x b_x > 0$.}
\end{cases}
\]
Note that we cannot have $\sum_x a_x \le 0$ and $\sum_x b_x \le 0$ simultaneously because $\sum_x p_x > 0$, and further that in the recursive definition, Fix is never applied to an ``illegal'' vector whose entry sum is nonpositive. Up to scaling, Fix is equivalent to Alg:

\begin{lem}
\label{lem:fixalg}
For any integer $n \ge 0$, and any $p \in \R^{2^n}$ such that $\sum_x p_x > 0$, $\operatorname{Fix}(p) = (\sum_x p_x) \operatorname{Alg}(p)$.
\end{lem}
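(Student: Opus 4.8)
The plan is to prove this by induction on $n$, exploiting the fact that the two procedures have parallel recursive structure: $\operatorname{Fix}$ splits a vector $p = \sm{a\\b}$ according to its first coordinate block, and $\operatorname{Alg}$ likewise decides its first output bit and then recurses on the marginals of the appropriate half. For the base case $n=0$ both sides agree immediately: $\operatorname{Fix}(p)=p$ by definition, while $\operatorname{Alg}(p)$ is the point distribution on the empty string (the vector $1\in\R^1$), so $(\sum_x p_x)\operatorname{Alg}(p)=p$.

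For the inductive step I would write $p=\sm{a\\b}$ with $a,b\in\R^{2^{n-1}}$ indexing the strings whose first bit is $0$ and $1$ respectively, so that the top-level marginals are $S_0=\sum_x a_x$, $S_1=\sum_x b_x$, and $S_\emptyset=\sum_x p_x$. The key observation is that, conditioned on $\operatorname{Alg}$ fixing the first output bit to $0$, every subsequent marginal it queries has the form $S_{0y'}=\sum_{x',\,x'_{1\dots k}=y'} a_{x'}$, i.e.\ a marginal of $a$; hence the conditional distribution of the remaining $n-1$ bits is exactly $\operatorname{Alg}(a)$, and symmetrically for $b$. It then remains to determine the probability $\operatorname{Alg}$ assigns to each value of the first bit in the three cases of the $\operatorname{Fix}$ definition.

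When $\sum_x a_x>0$ and $\sum_x b_x>0$, step~(b) of $\operatorname{Alg}$ applies, giving first bit $0$ with probability $S_0/S_\emptyset$ and $1$ with probability $S_1/S_\emptyset$, so $\operatorname{Alg}(p)=\sm{(S_0/S_\emptyset)\operatorname{Alg}(a)\\(S_1/S_\emptyset)\operatorname{Alg}(b)}$; multiplying by $S_\emptyset$ and applying the inductive hypothesis to $a$ and $b$ recovers $\sm{\operatorname{Fix}(a)\\\operatorname{Fix}(b)}=\operatorname{Fix}(p)$. When $\sum_x a_x>0$ and $\sum_x b_x\le 0$, I claim $\operatorname{Alg}$ sets the first bit to $0$ deterministically, so $\operatorname{Alg}(p)=\sm{\operatorname{Alg}(a)\\0}$; then $(\sum_x p_x)\operatorname{Alg}(p)=\tfrac{\sum_x p_x}{\sum_x a_x}\sm{(\sum_x a_x)\operatorname{Alg}(a)\\0}=\tfrac{\sum_x p_x}{\sum_x a_x}\sm{\operatorname{Fix}(a)\\0}=\operatorname{Fix}(p)$ by the hypothesis applied to $a$ and the definition of $\operatorname{Fix}$. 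The third case is symmetric.

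The one point requiring genuine care, and the main obstacle, is verifying the claim in the second case that the first bit is deterministically $0$, because $\operatorname{Fix}$ uses the non-strict condition $\sum_x b_x\le 0$ whereas step~(a) of $\operatorname{Alg}$ fires only on the strict inequality $S_1<0$. If $\sum_x b_x<0$, step~(a) indeed forces the first bit to $0$. If instead $\sum_x b_x=0$, step~(a) does not fire, but step~(b) then applies with $S_0/S_\emptyset=S_0/(S_0+0)=1$, so the first bit is again $0$ with probability $1$. Thus both subcases collapse to the same behaviour, matching $\operatorname{Fix}$, and the induction goes through. Throughout I would also note, as the excerpt already remarks, that $\sum_x p_x>0$ guarantees at most one of $S_0,S_1$ is nonpositive and that $\operatorname{Fix}$ and $\operatorname{Alg}$ are only ever invoked recursively on blocks with strictly positive entry sum, so all the quantities above are well defined.
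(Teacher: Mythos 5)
Your proof is correct and follows essentially the same route as the paper: induction on $n$, writing $p = \sm{a\\b}$ and matching the three cases in the definition of $\operatorname{Fix}$ to how $\operatorname{Alg}$ treats the first output bit, then applying the inductive hypothesis to $a$ and $b$. Your explicit handling of the $\sum_x b_x = 0$ subcase (where step (b) of $\operatorname{Alg}$, not step (a), forces the first bit to $0$ with probability $S_0/S_\emptyset = 1$) is in fact more careful than the paper's, whose intermediate displayed formula for $\operatorname{Alg}(p)$ carries a spurious $\frac{1}{\sum_x a_x}$ prefactor in the degenerate cases --- a typo that your derivation implicitly corrects, since the subsequent step of the paper's proof agrees with yours.
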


\begin{proof}
The proof is by induction on $n$. The claim clearly holds for $n=0$, where Fix$(p) = p = p \cdot 1 = p \operatorname{Alg}(p)$. For $n \ge 1$, writing $p = \sm{a\\b}$ for some $a,b \in \R^{2^{n-1}}$, by the definition of Alg we have

\[
\operatorname{Alg}(p) =
\begin{cases}
\frac{1}{\sum_x p_x} \begin{pmatrix} (\sum_x a_x) \operatorname{Alg}(a) \\ (\sum_x b_x)  \operatorname{Alg}(b) \end{pmatrix} & \text{ if $\sum_x a_x > 0$ and $\sum_x b_x > 0$}\\
\frac{1}{\sum_x a_x} \begin{pmatrix} \operatorname{Alg}(a) \\ 0 \end{pmatrix} & \text{ if $\sum_x a_x > 0$ and $\sum_x b_x \le 0$}\\
\frac{1}{\sum_x b_x} \begin{pmatrix} 0 \\ \operatorname{Alg}(b) \end{pmatrix} & \text{ if $\sum_x a_x \le 0$ and $\sum_x b_x > 0$.}
\end{cases}
\]
So, using the inductive hypothesis,
\[
(\sum_x p_x) \operatorname{Alg}(p) =
\begin{cases}
\begin{pmatrix} \operatorname{Fix}(a) \\ \operatorname{Fix}(b) \end{pmatrix} & \text{ if $\sum_x a_x > 0$ and $\sum_x b_x > 0$}\\
\frac{\sum_x p_x}{\sum_x a_x} \begin{pmatrix} \operatorname{Fix}(a) \\ 0 \end{pmatrix} & \text{ if $\sum_x a_x > 0$ and $\sum_x b_x \le 0$}\\
\frac{\sum_x p_x}{\sum_x b_x} \begin{pmatrix} 0 \\ \operatorname{Fix}(b) \end{pmatrix} & \text{ if $\sum_x a_x \le 0$ and $\sum_x b_x > 0$,}
\end{cases}
\]
which is equal to $\operatorname{Fix}(p)$ as required.
\end{proof}

Therefore, if we apply Alg to some approximate distribution $p'$, the final probability distribution sampled from is precisely Fix$(p') / S$, where $S = \sum_x p'_x$. This shows, in particular, that for all $p$ such that $\sum_x p_x > 0$, Fix$(p)_x \ge 0$ for all $x$.

\begin{lem}
\label{lem:fix}
For any integer $n \ge 0$, and any $p \in \R^{2^n}$ such that $\sum_x p_x > 0$, $\|\operatorname{Fix}(p) - p \|_1 = 2\sum_{x,p_x < 0} |p_x|$.
\end{lem}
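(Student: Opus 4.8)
The plan is to reduce the claimed $\ell_1$ identity to three structural properties of $\operatorname{Fix}$ and then verify those properties by a single induction on $n$, rather than inducting directly on the norm. Write $P^- := \sum_{x:\,p_x < 0}|p_x|$ for the total negative mass and $P^+ := \sum_{x:\,p_x \ge 0} p_x$, so that $\sum_x p_x = P^+ - P^-$ and the goal is $\|\operatorname{Fix}(p) - p\|_1 = 2P^-$. The three properties I would isolate are: (A) $\operatorname{Fix}(p)_x = 0$ whenever $p_x < 0$; (B) $0 \le \operatorname{Fix}(p)_x \le p_x$ whenever $p_x \ge 0$; and (C) mass conservation, $\sum_x \operatorname{Fix}(p)_x = \sum_x p_x$.

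Granting (A)--(C), the identity falls out with no case analysis. Splitting the sum according to the sign of $p_x$, and using (A) to discard the $\operatorname{Fix}$ contribution on negative indices,
\[
\|\operatorname{Fix}(p) - p\|_1 = \sum_{p_x < 0}|p_x| + \sum_{p_x \ge 0}\bigl(p_x - \operatorname{Fix}(p)_x\bigr) = P^- + P^+ - \sum_x \operatorname{Fix}(p)_x = P^- + P^+ - \sum_x p_x = 2P^-,
\]
where the second equality uses (B) to drop the absolute values and the third uses (C).

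The heart of the argument is therefore to establish (A)--(C), and for this I would prove the stronger, cleanly recursive invariant that $\operatorname{Fix}$ acts diagonally: there exist coefficients $c_x \in [0,1]$ with $\operatorname{Fix}(p)_x = c_x p_x$ for all $x$, together with (C). This invariant immediately yields (A), since $c_x p_x \ge 0$ with $c_x \ge 0$ forces $c_x = 0$ when $p_x < 0$, and (B), since $c_x \in [0,1]$ and $p_x \ge 0$ give $0 \le c_x p_x \le p_x$. I would prove the invariant by induction on $n$, following the recursion in the definition of $\operatorname{Fix}$: writing $p = \sm{a\\b}$, in the non-killing case the two halves inherit their coefficients and partial sums directly, while in a killing case, say $\sum_x a_x > 0 \ge \sum_x b_x$, the surviving half is rescaled by $\lambda = (\sum_x p_x)/(\sum_x a_x)$ and the other half is zeroed, so the composite coefficients are $\lambda c^a_x$ on the $a$-half and $0$ on the $b$-half. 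Mass conservation (C) is then immediate in each case from the choice of $\lambda$.

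The one point requiring care---and the crux of the whole lemma---is that each rescaling factor lies in $(0,1]$: since $\sum_x p_x = \sum_x a_x + \sum_x b_x$ with $\sum_x b_x \le 0 < \sum_x a_x$ and $\sum_x p_x > 0$, we get $0 < \lambda \le 1$. Composing such factors along the recursion keeps every $c_x$ in $[0,1]$, which is exactly the invariant that a naive induction applied directly to $\|\operatorname{Fix}(p) - p\|_1$ fails to supply: there the rescaled term $\lambda\operatorname{Fix}(a) - a$ cannot be matched against the inductive equality $\|\operatorname{Fix}(a) - a\|_1 = 2\sum_{x:\,a_x<0}|a_x|$, because the rescaling mixes positive entries in a way the equality alone does not control. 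Identifying the diagonal-with-$c_x\in[0,1]$ invariant is what makes the recursion close, after which the displayed computation completes the proof.
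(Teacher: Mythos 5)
Your overall architecture matches the paper's: the paper reduces the lemma to exactly your (A), (B), (C) (its Claims 2, 1, 3), finishes with the identical sign-splitting computation, and proves (B) by the same observation that entries are only ever zeroed or rescaled by factors in $(0,1]$. The problem is your derivation of (A). Your stated invariant --- $\operatorname{Fix}(p)_x = c_x p_x$ with $c_x \in [0,1]$, plus mass conservation --- does not imply (A), and the sentence ``since $c_x p_x \ge 0$ with $c_x \ge 0$ forces $c_x = 0$ when $p_x < 0$'' is circular: the premise $c_x p_x \ge 0$, i.e.\ $\operatorname{Fix}(p)_x \ge 0$, is (for a negative entry) precisely the statement (A) being proved, and it is nowhere established. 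The gap matters, because without (A) the lemma does not follow from the invariant: mass conservation gives $\sum_{x, p_x \ge 0}(1-c_x)p_x = \sum_{x, p_x<0}(1-c_x)|p_x|$, hence only
\[ \|\operatorname{Fix}(p)-p\|_1 \;=\; \sum_x (1-c_x)|p_x| \;=\; 2\sum_{x,\,p_x<0}(1-c_x)|p_x|, \]
which falls short of $2\sum_{x,\,p_x<0}|p_x|$ whenever some negative entry is merely rescaled ($c_x>0$) rather than zeroed. So the one property your write-up never actually proves is the crux of the lemma.

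The repair is easy and stays inside your induction: strengthen the invariant to include ``$c_x = 0$ whenever $p_x < 0$'' (equivalently, $\operatorname{Fix}(p)\ge 0$). The inductive step preserves this clause: in a killing case the killed half is zeroed outright and the surviving half satisfies it by the inductive hypothesis (rescaling by $\lambda>0$ preserves zeros), while in the non-killing case both halves satisfy it by hypothesis; the base case $n=0$ is vacuous because $\operatorname{Fix}$ is only ever applied to vectors with positive entry sum, so a singleton is a positive number. This is in essence how the paper proves its Claim 2: a negative entry can never survive to the bottom of the recursion, so it is either inside a killed block or is zeroed when the recursion reaches the $2$-dimensional vector containing it. (Alternatively, you could import non-negativity of $\operatorname{Fix}$ from the preceding lemma, since $\operatorname{Fix}(p) = (\sum_x p'_x)\operatorname{Alg}(p)$ with $\operatorname{Alg}(p)$ a probability distribution, but your proof as written does not invoke this.)
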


\begin{proof}
We first show that the following claims imply the lemma: for all $p \in \R^{2^n}$ such that $\sum_x p_x > 0$, then
\begin{enumerate}
\item For all $x$ such that $p_x \ge 0$, $0 \le \operatorname{Fix}(p)_x \le p_x$;
\item For all $x$ such that $p_x < 0$, $\operatorname{Fix}(p)_x = 0$;
\item $\sum_x \operatorname{Fix}(p)_x = \sum_x p_x$.
\end{enumerate}
Indeed, assuming these claims, we have
\begin{eqnarray*}
\|\operatorname{Fix}(p) - p \|_1 &=& \sum_x | p_x - \operatorname{Fix}(p)_x | = \sum_{x, p_x \ge 0} (p_x - \operatorname{Fix}(p)_x) - \sum_{x, p_x < 0} p_x \\
&=& \sum_{x, p_x \ge 0} p_x - \sum_x p_x - \sum_{x,p_x < 0} p_x = 2\sum_{x,p_x < 0} |p_x|
\end{eqnarray*}
as desired, where the second equality uses claims 1 and 2, and the third uses claims 2 and 3. It remains to prove the claims. Claim 1 follows from observing that Fix never changes the sign of an element of $p$, and at each step modifies elements by either zeroing them, or rescaling them by a scaling factor upper-bounded by 1. Claim 2 follows from considering the last-but-one step of Fix, where it is applied to vectors of the form $\sm{ \alpha \\ \beta}$ with $\alpha + \beta > 0$; if either of $\alpha$ or $\beta$ is negative, it will be zeroed by Fix. Claim 3 is shown by induction: it clearly holds for $n=0$ as $p \ge 0$ and Fix$(p)=p$, and for $n \ge 1$, assuming the inductive hypothesis for vectors on $\R^{2^{n-1}}$ and inspecting the definition of Fix shows that in all three cases $\sum_x \operatorname{Fix}(p)_x = \sum_x p_x$. This completes the proof.
\end{proof}

We are finally ready to prove Lemma \ref{lem:approx}.

\begin{replem}{lem:approx}
Let $p$ be a probability distribution on $\{0,1\}^n$. Assume that $p':\{0,1\}^n \rightarrow \R$ satisfies $\| p' - p\|_1 \le \delta$ for some $\delta < 1$. Then $\|\operatorname{Alg}(p') - p \|_1 \le 4\delta/(1-\delta)$.
\end{replem}

\begin{proof}
First, we have
\[ |1 - \sum_x p'_x| = | \sum_x p_x - p'_x| \le \sum_x |p_x- p'_x| \le \delta, \]
so $S := \sum_x p'_x \ge 1-\delta > 0$ and hence $p'$ satisfies the preconditions of Lemmas \ref{lem:fixalg} and \ref{lem:fix}. So
\beas \|\operatorname{Alg}(p') - p \|_1 &=& \| \frac{1}{S} \operatorname{Fix}(p') - p\|_1\\
&\le& \frac{1}{S} \| \operatorname{Fix}(p') - p'\|_1 + \frac{1}{S} \| p' - p \|_1 + \|\frac{1}{S} p - p\|_1\\
&\le& \frac{2 \sum_{x,p'_x < 0} |p'_x|}{S} + \frac{\delta}{S} + \frac{1}{S} - 1\\
&\le& \frac{2\delta}{1-\delta}  + \frac{\delta+1}{1-\delta} - 1\\
&=& \frac{4\delta}{1-\delta},
\eeas
where the first equality is Lemma \ref{lem:fixalg}, the first inequality is the triangle inequality, the second is Lemma \ref{lem:fix}, and the third uses $\sum_{x,p'_x<0} |p'_x| \le \sum_{x,p'_x<0} |p_x - p'_x| \le \delta$.
\end{proof}


\bibliographystyle{myplain2-doi}
\bibliography{poweriqp}

\end{document}